\newtheorem{theorem}{Theorem}
\newtheorem{corollary}{Corollary}
\newtheorem{definition}{Definition}
\newtheorem{lemma}{Lemma}
\newtheorem{invariant}{Invariant}
\newtheorem{notation}{Notation}
\newcommand{\concat}{\ensuremath{\sqcup}}
\newcommand{\gtext}[1]{\textcolor{gray!75}{#1}}
\newcommand{\algorithmlineheight}{\savebox\strutbox{$\vphantom{\highlight{y^\ell}}$}}
\newcommand{\reducedstrut}{\vrule width 0pt height .9\ht\strutbox depth .9\dp\strutbox\relax}
\newcommand{\highlight}[1]{%
  \begingroup
  \setlength{\fboxsep}{0pt}%
  \colorbox{red!20}{\reducedstrut$\displaystyle #1$\/}%
  \endgroup
}
\DeclareRobustCommand{\var}[1]{\begingroup\newmcodes@\mathsf{#1}\endgroup}
\DeclareRobustCommand{\keyword}[1]{\begingroup\newmcodes@{\bf \mathbf{#1}}\endgroup}
\newcommand{\defn}{\ensuremath{\triangleq}}
\providecommand{\keywords}[1]
{
  \small	
  \textbf{\textit{Keywords---}} #1
}
\title{Parallel RAM from Cyclic Circuits}
\author{David Heath\\
  \url{daheath@illinois.edu}\\
University of Illinois Urbana-Champaign}
\title{Parallel RAM from Cyclic Circuits}
\date{}
\begin{document}

\pagenumbering{gobble}

\maketitle

\begin{abstract}
Known simulations of random access machines (RAMs) or parallel RAMs (PRAMs) by
Boolean circuits incur significant polynomial blow-up,
due to the need to repeatedly simulate accesses to a large main memory.

Consider a single modification to Boolean circuits that
removes the restriction that circuit graphs are acyclic.
We call this the \emph{cyclic circuit} model.
Note, cyclic circuits remain \emph{combinational}, as they do not allow wire values to change over time.

We simulate PRAM with a cyclic circuit, and the blow-up from our simulation is only \emph{polylogarithmic}.
Consider a PRAM program $\mathcal{P}$ that on a length-$n$ input uses an arbitrary number of processors to manipulate words of size $\Theta(\log n)$ bits and then halts within $W(n)$ work.
We construct a size-$O(W(n)\cdot \log^4 n)$  cyclic circuit that simulates
$\mathcal{P}$.
Suppose that on a particular input, $\mathcal{P}$ halts in time $T$;
our circuit computes the same output within $T \cdot O(\log^3 n)$ gate delay.

This implies theoretical feasibility of powerful parallel machines.
Cyclic circuits can be implemented in hardware, and our circuit achieves
performance within polylog factors of PRAM.
Our simulated PRAM synchronizes processors via logical
dependencies between wires.
\end{abstract}
\keywords{Parallel Random Access Machine, Boolean Circuits, Circuit Cycles}

\newpage
\tableofcontents
\newpage

\pagenumbering{arabic}

\section{Introduction}\label{sec:intro}

The parallel random access machine (PRAM,~\cite{ForWyl78}) model is ubiquituous
in the study of parallel algorithms.
The model considers a single machine that hosts an arbitrary number of \emph{processors},
each of which operates in lockstep.
At each machine step, each processor is given read/write access to a
shared memory, and the machine satisfies these accesses simultaneously and in constant time.

We consider two measures of PRAM complexity:
the machine's \emph{work} measures the total number of instructions executed
across all processors, while the machine's \emph{runtime} (sometimes called its \emph{span}) measures the number
of machine steps.
Generally speaking, parallel algorithms reduce runtime while
keeping work reasonable, which often involves aggressively breaking a
problem into pieces and then using different processors to work on different pieces in parallel.

PRAM processors can easily distribute problems and combine solutions due to the model's \emph{synchronous} nature.
One processor can simply write a value to a shared memory cell with the
understanding that some other processor will read that cell on the next
machine step, achieving a communication channel.

Within PRAM there exists a hierarchy of variants that explain
what happens when more than one processor reads/writes the same memory cell in the same step.
The most powerful of these that is typically considered is concurrent-read concurrent-write (CRCW) PRAM,
which, as the name suggests, allows multiple processors to read/write the same memory cell without error.

While PRAM is central to the analysis of parallel algorithms, the model -- and in particular the CRCW variant --
is often seen as unrealistic.
It seems fundamentally challenging to build machines that implement
numerous tightly synchronized processors.
Today's multicore machines usually display only relatively
low degrees of parallelism, often limited to only dozens or hundreds of processors,
and parallelism is usually \emph{asynchronous}, with each processor
executing independently of the others.
Distributed systems can reach higher levels of parallelism, with the trade-off that synchronous behavior becomes even harder to achieve. 
This lack of synchronization makes it harder for processors to communicate, which complicates the system.
The inherent difficulty of implementing PRAM has led
to alternate parallel models that forgo PRAM's power and simplicity in exchange for more accurately
capturing real-world machines; see \Cref{sec:relwork}.
However, PRAM remains an attractive abstraction for reasoning about parallelism, and implementations of the model are highly desirable.

\subsection{Our Contribution: PRAM from Boolean Gates}

We show theoretical feasibility of machines that implement PRAM by simulating PRAM with a combinational Boolean circuit.
Our circuit design has size/delay that closely matches the work/runtime of the PRAM, and it can support any level of parallelism.

\paragraph{Our target model.}
We consider a PRAM that manipulates words of size $w = \Theta(\log n)$ bits.
The PRAM supports concurrent reads/writes (it is CRCW), and we consider
the most powerful accepted strategy for write conflict resolution:
when multiple processors write to the same address, the machine combines
written values with an arbitrary associative operation $\star$.

Our target PRAM supports varying \emph{processor activity}; at each step, any of the processors can be inactive.
The machine's Total work is the sum over the number of active processors in each machine step.
We \emph{do not} limit the maximum number of active processors.

See \Cref{sec:word-pram} for more details of the target model.

\paragraph{Our circuit's properties.}

We construct a combinational circuit $C$ that simulates the above PRAM. 
To construct $C$, we make only one meaningful change to the classically
accepted definition of a Boolean circuit: Boolean circuits are classically
\emph{acyclic}, and we remove this constraint.
This generalization of circuits has been
considered by prior work; see \Cref{sec:relwork}.
We emphasize that our circuit remains combinational in that its wires are \emph{stateless}:
each of $C$'s wire values is a (deterministic) function of $C$'s input.

Our main result is as follows:
\begin{theorem}[PRAM from Cyclic Circuits]\label{thm:main}
  Let $\mathcal{P}$ be a PRAM program that on length-$n$
  inputs halts within
  $W(n)$ work.
  There exists a cyclic circuit $C$ with $O(W(n) \cdot \log^4 n)$ fan-in-two gates such that for any length-$n$ input $\vb{x}$, $C_n(\vb{x}) = \mathcal{P}(\vb{x})$.
  Suppose that on input $\vb{x}$, $\mathcal{P}(\vb{x})$ halts in time $T$.
  Then $C_n(\vb{x})$ computes $\mathcal{P}(\vb{x})$ within $T \cdot O(\log^3 n)$ gate delay.
\end{theorem}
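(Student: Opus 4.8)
The plan is to realize $C$ as a single copy of a ``memory-resolution pipeline'' that processes all of $\mathcal{P}$'s memory requests at once, placed inside a feedback loop. An acyclic simulation would chain $T$ copies of a per-step circuit and pay size $\Theta(T)$ times the per-step cost; allowing cycles lets me instead feed the pipeline's resolved read values back into the logic that generates requests, so that one physical pipeline simulates the entire run. I will argue the feedback has a unique fixed point --- making $C$ genuinely combinational, with every wire a deterministic function of $\vb{x}$ --- and that the gate delay to reach it is governed by the computation's causal depth, namely $T$.

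The heart of the construction is the pipeline that resolves a collection of up to $W(n)$ requests under CRCW-$\star$ semantics. I tag each request (one per unit of work) with a key consisting of its address and the step (timestamp) at which it is issued, and route all requests through a sorting network keyed on this pair. Once requests to a common address are contiguous and ordered by time, one segmented parallel-prefix (scan) pass (i) combines all writes issued to the same address at the same step with the associative operator $\star$, and (ii) forwards to each read the most recent preceding write. A second sorting pass returns each resolved value to the processor that issued it. The network handles $W(n)$ requests with $O(W(n)\log^2 n)$ comparators, each acting on $O(\log n)$-bit keys and payloads; together with the $\Theta(\log n)$-bit word logic of the processors, this accounts for the $\log^4 n$ factor in size and the $O(\log^3 n)$ depth of a single pass through the loop.

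The crux is proving that the loop has a unique fixed point reached within $T$ passes. The enabling fact is that the content of a step-$t$ request --- its address, its operand, and whether its processor is even active --- is a function only of values read at steps strictly before $t$; so although the circuit graph is cyclic, the data dependence among wire values is stratified by time and is acyclic. I will prove by induction on $t$ that after the signal has traversed the loop $t$ times, i.e.\ within $t\cdot O(\log^3 n)$ gate delay, every request of timestamp $\le t$ carries its final content and its final resolved value. The base case records that step-$1$ requests depend only on $\vb{x}$; the inductive step is a single evaluation of the resolution pipeline on requests whose step-$\le(t{+}1)$ contents are, by hypothesis, already fixed, and since later-timestamp requests sort strictly after and cannot alter an earlier read, the newly resolved step-$(t{+}1)$ values are final. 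Time-stratification thus delivers both uniqueness of the fixed point and the delay bound: after $T$ passes the whole configuration is stable, the halt condition fires, and the output wires hold $\mathcal{P}(\vb{x})$.

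I expect the main obstacle to be making this combinational semantics rigorous. I must fix a semantics for cyclic circuits under which ``each wire value is a function of the input'' is a theorem rather than a hope, and then verify that my feedback wiring never introduces a \emph{same-step} circular dependence, since such a dependence would destroy both well-definedness and the clean $t\cdot O(\log^3 n)$ accounting. Secondary points to discharge carefully are sizing the single pipeline for up to $W(n)$ concurrent requests while keeping total size $O(W(n)\log^4 n)$, encoding timestamps and addresses within $O(\log n)$ bits, and detecting termination so the loop stabilizes instead of perpetually rewriting wires that are already correct.
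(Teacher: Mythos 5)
There is a genuine gap, and it sits exactly where you flagged your ``main obstacle,'' but the failure is worse than a same-step circularity: it is a \emph{cross-step} circularity introduced by your routing network itself. In the combinational cyclic-circuit semantics (the paper's \Cref{defn:legal}: wires are stateless, each wire holds the one value of the unique valid assignment), there is no such thing as a signal ``traversing the loop $t$ times.'' A single physical pipeline reused per pass is sequential hardware with registers, which is outside the model; so your construction must be read as one sorting-plus-scan network over all $W(n)$ requests, with feedback wires, and with your induction reinterpreted as a claim about when wires \emph{settle} (acquire delay). Under that reading the time-stratification argument breaks: a comparator network keyed on (address, timestamp) makes each routing decision by comparing \emph{two} keys, and in any standard sorting network (Batcher, AKS, etc.) the output position of an early request depends on comparisons against requests from all other time steps --- including future ones whose addresses are not yet determined because they depend on earlier reads. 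Concretely, the resolved value returned to a step-$1$ read flows through comparators whose control bits depend on step-$T$ request keys, while those keys depend on earlier reads: this is a circular wiring dependence that destroys both the unique-assignment property (your circuit may be illegal in the sense of \Cref{defn:legal}) and the $t \cdot O(\log^3 n)$ settling induction. The stratification of PRAM \emph{data} dependence does not automatically transfer to the \emph{wiring} dependence of the resolution network; arranging that it does is the actual technical content of the problem.

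The paper solves precisely this with two ingredients you are missing. First, its permutation/partition networks are \emph{dynamic} and self-routing by construction: the routing of input $i$ depends only on inputs originally to its left (a prefix property achieved via a stable radix-style partition with zero-counting), and the $\var{swap}$ primitive is built so that AND's eager semantics lets a settled value pass through before the other port settles --- the paper explicitly shows a Boolean-equivalent ``$\var{bad}$'' swap that destroys this. A generic comparator network has neither property. Second, the paper does \emph{not} route general multi-access memory through one network: it introduces the intermediate PSAM model (\Cref{sec:psam}) in which each address is written once and read once, so memory routing becomes a genuine permutation matching writes to reads; repeated access, CRCW $\star$-combination, and forking are then handled by a tree-based PSAM program ($\var{step}$, with memory and processor trees merged by $\uplus$), costing the extra $O(\log n)$ in \Cref{thm:pram-psam}, composed with the circuit simulation of \Cref{thm:psam-ckt}. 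Legality is then secured by the opt-in/opt-out filter discipline plus ``burn'' units that consume all unused routing capacity, plus the cleanness requirement on the PSAM program --- without which some partition routing bits are unconstrained and the assignment is non-unique. Your sort-and-scan resolution is a fine \emph{offline} or acyclic-per-step idea (it is close in spirit to classical PRAM-by-circuit simulations), but as stated it does not yield a legal cyclic circuit with the claimed delay; to repair it you would essentially have to rediscover the prefix-dependent dynamic routing and the single-use-memory decomposition that the paper builds.
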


$C$ is essentially an \emph{ideal} PRAM, modulo polylog performance factors.
$C$ closely matches PRAM in terms of both work and runtime, and because $C$ is built from Boolean gates, it is theoretically feasible that we could build $C$.
Informally speaking, the constant factors in $C$'s size and delay are also
reasonable, as we \emph{do not} use concretely expensive components such as expander graphs~\cite{AKS83}.



\Cref{thm:main} -- together with a relatively obvious simulation of cyclic circuits by PRAM (\Cref{thm:reverse}) -- implies that cyclic circuits and PRAM are nearly \emph{equivalent} in power.
Indeed, the complexity classes implied by the two models are closely related:

\begin{corollary}[PRAM and Cyclic Circuits -- informal]\label{cor:pram-and-ckts}
  Denote by $\var{PRAM}(W(n), T(n))$ the set of problems solvable by a
  bounded-word-size PRAM (\Cref{sec:word-pram}) within $O(W(n))$ work and $O(T(n))$ time.
  Denote by $\var{CCKT}(W(n), T(n))$ the set of problems solvable by a 
  cyclic circuit family $\{ C_n:n \in \mathbb{N} \}$ where $C_n$ has $O(W(n))$ size and computes its output within $O(T(n))$ delay.
  For all $W(n)$ and for all $T(n)$:
  \begin{align*}
    \var{PRAM}&(W(n) \cdot \mathrm{poly}(\log n), T(n) \cdot \mathrm{poly}(\log n))\\
    =\var{CCKT}&(W(n) \cdot \mathrm{poly}(\log n), T(n) \cdot \mathrm{poly}(\log n))
  \end{align*}
\end{corollary}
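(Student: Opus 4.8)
The plan is to prove the two inclusions separately: first $\var{PRAM} \subseteq \var{CCKT}$ directly from \Cref{thm:main}, and then $\var{CCKT} \subseteq \var{PRAM}$ from the reverse simulation \Cref{thm:reverse}. The unifying observation is that each simulation multiplies both resource measures by at most a $\mathrm{poly}(\log n)$ factor, while each class already carries a $\mathrm{poly}(\log n)$ slack; since the product of two polylogarithms is again a polylogarithm, the simulation overhead is absorbed and the two classes coincide.

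For the forward inclusion I would take a problem witnessed by a PRAM program $\mathcal{P}$ that halts within $O(W(n) \cdot \mathrm{poly}(\log n))$ work and, on a given length-$n$ input, within time $T'$ bounded by $O(T(n) \cdot \mathrm{poly}(\log n))$. Writing $W'(n)$ for the work bound, \Cref{thm:main} applied to $\mathcal{P}$ yields a cyclic circuit family $\{ C_n \}$ computing $\mathcal{P}$, of size $O(W'(n) \cdot \log^4 n)$ and with per-input delay $T' \cdot O(\log^3 n)$. The extra $\log^4 n$ and $\log^3 n$ factors fold into the ambient polylogarithm, so the size is $O(W(n) \cdot \mathrm{poly}(\log n))$ and the delay is $O(T(n) \cdot \mathrm{poly}(\log n))$, placing the problem in $\var{CCKT}(W(n) \cdot \mathrm{poly}(\log n), T(n) \cdot \mathrm{poly}(\log n))$.

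The reverse inclusion proceeds symmetrically: given a cyclic circuit family meeting the size and delay bounds, I would invoke \Cref{thm:reverse} to obtain a simulating PRAM at polylogarithmic overhead and absorb that overhead as before. The point to verify is that the reverse simulation matches circuit size with PRAM work and circuit delay with PRAM time up to polylog factors; the natural scheme---one processor per gate, propagating settled wire values over a number of synchronous rounds proportional to the circuit's delay---has exactly this property.

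I expect the only genuine obstacle to be the \emph{uniformity} concern hidden behind the corollary's ``informal'' label. For the displayed equation to be an honest statement about complexity classes, both directions of simulation must be uniform: the map $\mathcal{P} \mapsto \{ C_n \}$ of \Cref{thm:main} must be computable by a suitably restricted machine, and the PRAM of \Cref{thm:reverse} must be a single program independent of $n$. I would discharge this by fixing a uniformity notion (e.g.\ logspace-uniformity), checking that the circuit construction underlying \Cref{thm:main} is explicit and input-independent enough to meet it, and observing that the reverse simulation is already a fixed uniform program. The resource arithmetic is immediate; this uniformity bookkeeping is the delicate part.
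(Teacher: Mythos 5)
Your overall plan---two inclusions, forward from \Cref{thm:main}, reverse from \Cref{thm:reverse}, with the polylog simulation overheads absorbed into the ambient $\mathrm{poly}(\log n)$ slack---is exactly how the paper derives this corollary, and your resource arithmetic is fine. You also correctly identified that uniformity is the delicate part. But your concrete proposal for discharging it, ``fixing a uniformity notion (e.g.\ logspace-uniformity),'' is precisely the choice the paper explicitly rules out in \Cref{sec:ckt}: all one can conclude about a logspace transducer is that it halts in \emph{polynomial} time, so in the reverse inclusion the simulating PRAM of \Cref{thm:reverse} could spend $\mathrm{poly}(n)$ work merely writing down the description of a single gate. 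That destroys the $O(W(n) \cdot \mathrm{poly}(\log n))$ work bound: the class equality cannot tolerate polynomial work blow-up, only polylogarithmic. The paper instead introduces \emph{polylog uniformity} (\Cref{notation:uniform}): on input $(n, i)$, a random access Turing machine outputs the description of $C_n$'s $i$-th gate in $O(\mathrm{poly}(\log n))$ time, so the PRAM can compute the entire description in $O(W(n) \cdot \mathrm{poly}(\log n))$ work and $\mathrm{poly}(\log n)$ time in parallel, which is what the proof of \Cref{thm:reverse} uses. (The paper also notes that DLOGTIME uniformity fails for a different reason: it seems too weak to describe these circuits without polynomial size blow-up.) The forward direction then needs the matching fact, which the paper checks, that the circuit family built for \Cref{thm:main} is itself polylog-uniform.

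One further hypothesis you glossed over: the formal version of the corollary does not hold ``for all $W(n)$'' but requires $W(n) = O(\mathrm{poly}(n))$ with $W(n)$ polylog-computable, conditions inherited from \Cref{thm:reverse} and \Cref{thm:psam-ckt}. The polynomial bound on size matters for the reverse direction (the PRAM's addressable memory is bounded polynomially in $n$, so it cannot even store a super-polynomial circuit description), and polylog-computability of $W(n)$ is needed so the construction can be laid out without expensive preprocessing. Your write-up treats the quantification as unrestricted, which matches only the informal statement.
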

(More formally, the above circuit family requires a modest notion of uniformity; see \Cref{sec:ckt}.)
Hence, our work shows that to develop parallel algorithms that are efficient in terms of both work and time (modulo polylog factors), one can work with PRAM, or one can work with Boolean gates. Choose whichever is more convenient.

\subsection{Intuition and Novelty}

At a high level, our combinational circuit design consists of a large number of small \emph{compute units}, each of which is a Boolean subcircuit that executes some single PRAM task, such as reading a memory element, adding two words, branching to a different instruction, etc.
The main challenge of our simulation is in properly gluing the units together.
Specifically, our simulation addresses the following key problems: 
\begin{itemize}
  \item How can we maintain a main memory such that each compute unit can access an arbitrary memory cell at amortized polylog gate cost?
  \item How can we coordinate compute units such that they can run in sequence or in parallel?
\end{itemize}
The main ingredients used to solve these problems are \emph{permutation networks}, which we use as the glue between compute units.
Prior works also glue PRAM with permutation networks, but the novelty of our work is in showing that \emph{the entire simulation} can be achieved by a single combinational Boolean circuit.

Simulating PRAM with just Boolean gates requires careful handling.
In particular,
(1) we ensure our permutation networks satisfy a \emph{dynamic} property, which  prevents compute units from `blocking' execution, and
(2) we `program' compute units to upgrade from the limited memory routing of a permutation to full-fledged CRCW parallel random access memory.
\Cref{sec:overview} explains our approach at a high level.

To our knowledge, this polylogarithmic connection between PRAM and combinational circuits has been overlooked.
We believe that this may be due in large part to a common misconception that combinational circuits \emph{must} be acyclic.
This is simply false, and Boolean circuits with cycles are well defined, without introducing notions from sequential hardware design such as stateful wires and timing; see examples throughout this work and discussion in~\cite{RieBru12}.
Even with the insight that cycles are well defined and useful, careful handling is required, and this work describes such handling in detail.

\subsection{Related Work}\label{sec:relwork}

\paragraph{PRAM and Circuits.}
\cite{StoVis84} was the first to simulate PRAM with
a poly-size Boolean circuit.
Their simulation incurs significant polynomial overhead.
They consider a PRAM with $n$-word input where each data word is at most $n$ bits long.
Let $T$ bound the runtime of the PRAM and let $p$ denote the number of processors.
Let $L = O(n + T + \log p)$.
\cite{StoVis84} simulate their considered PRAM by a circuit with
$O(p  T  L  (L^2 + p T))$ unbounded-fan-in Boolean gates.

\cite{StoVis84}'s considered machine manipulates larger words than ours, but
this discrepancy does not account for the difference in circuit size.
The main cost of \cite{StoVis84}'s simulation is that the circuit $T$ times simulates each of $p$ processors scanning each element in a shared memory of size $O(p \cdot T)$.
This immediately imposes $\Omega(p^2\cdot T^2)$ cost, even before word size is considered.
The high cost of acyclic circuit simulation seems inevitable, as it is hard to imagine a correct acyclic circuit that \emph{does not} scan all of shared memory on each simulated instruction, and to our knowledge no follow-on work improved the simulation.
While \cite{StoVis84}'s size blow-up is high, they use unbounded fan-in gates to achieve \emph{constant} blow-up in terms of circuit depth, establishing an important connection between circuits and PRAM.
\cite{KarRam90} discusses this connection further, in the context of the complexity class NC.

Our PRAM simulation uses fan-in-two gates and incurs only polylog overhead in terms of both size and delay.
Thus, our work can be understood as trading in polylog circuit delay in exchange for dramatically improved size.
To achieve this size improvement, we must consider circuit cycles.

\paragraph{Cyclic Circuits.}
Combinational circuits with cycles have been studied in numerous works, e.g.~\cite{Short60,Rivest77,Malik93,ShiBerTou96,Riedel04,MenShiBer12,RieBru12} and more.
These works use cycles to reduce circuit size,
but they obtain only relatively small benefit.
For instance, \cite{RieBru12} demonstrate a circuit family for which they can improve by only approximately factor two.
To our knowledge there are no prior results (aside from \cite{C:HeaKolOst23}, see next) that demonstrate \emph{asymptotic} improvement from the inclusion of cycles.
Our construction uses cycles to reduce PRAM overhead from a significant polynomial factor~\cite{StoVis84} to a relatively low \emph{polylogarithmic} factor.
Thus, we feel that our result shows cyclic circuits are far more interesting than previously thought.

Note, our formalization of cyclic circuits is similar to that of some of these prior works, in particular to that of~\cite{RieBru12} and \cite{MenShiBer12}.

To our knowledge, the connection between RAM and circuit cycles went unnoticed until the work of~\cite{C:HeaKolOst23}.
\cite{C:HeaKolOst23} formalized a cyclic circuit model called \emph{tri-state circuits} (TSCs).
The differences between TSCs and the cyclic circuits considered here are
not important in this work; we prefer cyclic Boolean circuits here because the gates are more
familiar.
Similar to our work, \cite{C:HeaKolOst23} demonstrate circuits that implement (non-parallel) RAM.
Both our approach and \cite{C:HeaKolOst23}'s approach
leverage \emph{dynamic permutation networks}, a key ingredient for simulating memory accesses.
\cite{C:HeaKolOst23} show that for a word RAM program running for $T = O(\mathrm{poly}(n))$ steps, there exists a (randomized) TSC with $O(T \cdot \log^3 T \cdot \log \log T)$ total gates that simulates the word RAM.
Our work follows on from \cite{C:HeaKolOst23}, taking the interesting and non-trivial step from RAM
to PRAM.

\paragraph{Connections to Cryptography.}
\cite{C:HeaKolOst23} used their TSC-based construction in the context of a cryptographic technique called circuit garbling~\cite{Yao86}.
By applying TSCs, \cite{C:HeaKolOst23} obtained a new approach to `Garbled RAM'~\cite{EC:LuOst13}.
This application required~\cite{C:HeaKolOst23} to consider \emph{oblivious} TSCs.
Oblivious TSCs are roughly analogous to oblivious Turing Machines~\cite{PipFis79}.
In short, circuits with cycles allow gates to propagate values in
data-dependent orders; in an oblivious TSC, this order must
`appear' (in a cryptographic sense) independent of the circuit's input.

Extending our PRAM results to oblivious execution could enable
interesting results for `oblivious parallel garbled
RAM'~\cite{TCC:BoyChuPas16}.
We do not pursue this result here, and achieving cyclic-circuit-based oblivious PRAM without unacceptably high overhead remains an interesting open problem.

\paragraph{Simulating PRAM; alternative parallel models.}

We are, of course, not the first to specify a construction that simulates PRAM.
However, a key difference between our work and prior works is that most
prior works take as primitive the notion of a standalone processor.
Namely, they start from the assumption that there is some collection of
processors, and what needs to be achieved is an appropriate coordination
of those processors, see e.g.~\cite{Vis84,Val90}.

Achieving PRAM starting from independent processors is
challenging, because inherently asynchronous processors must be synchronized.
This challenge
has led researchers to propose numerous alternative parallel models,
e.g. the aggregate model~\cite{DymCoo89},
the Bulk Synchronous Parallel model~\cite{Val90b},
LogP~\cite{CulKarPat93},
multhreaded models~\cite{AcaBleBlu00},
the Massively Parallel Communication model~\cite{BeaKouSuc17},
the fork-join model and its variants~\cite{BleFinGuSun20, CLRS22},
and more~\cite{ZhaCheSunMia07}.
The common thread of such models is that they embrace the asynchronous nature of independently executing processors.
While these models are important for
understanding how to control large and/or distributed
systems,
PRAM arguably remains the central ingredient in the study of
parallel algorithms, see e.g.~\cite{GibRyt89, CasLegRob08}.

Our approach circumvents the difficulty of processor synchronization.
For us, synchronous behavior is `free' in the sense that we
coordinate simulated processors with simple logical dependencies between
wires.
This automatically enforces a notion of lockstep behavior between processors, without the need for any additional enforcement.

As an aside, we technically introduce a parallel model, which we call
the parallel single access machine (PSAM) model; see \Cref{sec:psam}.
PSAM is a relatively natural weakening of PRAM.
We introduce PSAM simply as an intermediate step of our PRAM simulation, but the model may be of independent interest.
To our knowledge, this weakening of PRAM has not been explored.

\paragraph{Sorting networks and permutations networks.}

The key ingredient in our PRAM is a dynamically routing \emph{permutation network}.
Permutation networks are the subject of many works, see e.g. the classic works of~\cite{Ben64,Wak68,Bat68}.

\cite{CheChe96} presented a \emph{self-routing} permutation network using $O(n \cdot \log^2 n)$ swap operations.
Their network is based on a \emph{binary radix sort}.
Our construction also features a self-routing permutation network whose
structure is similar to that of \cite{CheChe96}.
Our network
(1) can be constructed from Boolean gates and
(2) permutes \emph{dynamically}, by which we mean that even if inputs arrive one by one, each input can be routed to its destination \emph{before} the next input arrives.
\cite{CheChe96} showed their network automatically routes all input packets to output destinations, but they do not show their network achieves the above \emph{dynamic} property where packets pass through the network even when not all packets are available.

\cite{C:HeaKolOst23} also constructed their circuit-based RAM from a
permutation network, but our network has better constants, and -- as we will see -- it
supports parallelism.
\cite{C:HeaKolOst23}'s network supports sequential routing only.

\section{Overview}\label{sec:overview}

This section sketches the main components of our approach at a high level.
Subsequent sections formalize the ideas explained here.

\subsection{Simulating Random Access with Gates}\label{sec:overview-memory}
Consider the Boolean basis of AND gates, XOR gates, and a
distinguished constant wire holding $1$.
Now,
(1) modify this basis such that
circuits are allowed to have cycles and
(2) ensure AND gates output zero eagerly.
Namely, an AND gate where one input wire holds zero outputs zero, regardless of the content of the second input wire.
Circuit cycles allow us to run certain parts of circuits before some input wires
to those parts are computed.
As we will see, this unlocks the ability to execute subcircuits
in \emph{data-dependent orders}.
By leveraging this, we can simulate PRAM with only a quasilinear number of gates.

\paragraph{Running gates in data-dependent orders.}
We start with an example that demonstrates this key data-dependence.
To explain our example, we need a helper component called a multiplexer.
A multiplexer is a standard circuit component that selects between two bits $x$ and $y$ based on some selection bit $s$:
\begin{center}
\begin{minipage}{0.35\textwidth}
\begin{center}
  \includegraphics{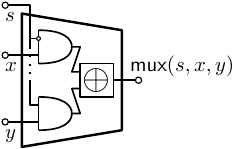}
\end{center}
\end{minipage}
\begin{minipage}{0.35\textwidth}
\begin{center}
\begin{align*}
  \var{mux}(s, x, y) \defn (1 \oplus s) \cdot x \oplus s \cdot y = \begin{cases}
    x & \text{if } s = 0\\
    y & \text{if } s = 1
  \end{cases}
\end{align*}
\end{center}
\end{minipage}
\end{center}
Because of AND's eager semantics, the output wire of the MUX \emph{does not depend} on whichever input is not selected.
For instance, if $s = 0$, the output does not depend on $y$, so the circuit can compute the multiplexer output \emph{before} it computes $y$.
In other words, this multiplexer inherits the eager behavior of AND.

Now, consider the following cyclic circuit on the left\footnote{
  This left example was noticed also in prior works, e.g. \cite{Stok92,Malik93}.
} which combines three multiplexers and
two unspecified subcircuits $f$ and $g$:
\begin{center}
  \includegraphics{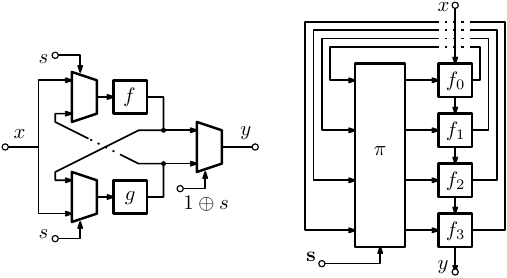}
\end{center}
Suppose we set the left circuit's selection bit $s$ to zero.
In this case, the two multiplexers on the left select their top argument, and
the multiplexer on the right selects its bottom argument.
Due to the eager behavior of multiplexers, the top left multiplexer outputs $x$, even though
its bottom input is not yet computed.
$x$ flows into $f$, and $f(x)$ flows to the bottom left multiplexer, which passes $f(x)$ to $g$.
Thus, the final multiplexer -- which outputs its bottom input -- outputs $g(f(x))$.

If we instead set $s$ to one, then the top left multiplexer
initially cannot fire, but the bottom left multiplexer can.
By tracing execution, we see that the circuit outputs $f(g(x))$.
Thus, the circuit computes the following:
\begin{align}\label{eqn:choice}
  y = \begin{cases}
    g(f(x)) & \text{if } s = 0\\
    f(g(x)) & \text{if } s = 1
  \end{cases}
\end{align}
Our example uses $f$ and $g$ in an order that depends on the runtime value $s$.
This demonstrates clear advantage over acyclic
circuits, because an acyclic circuit cannot in general compute the above
function, unless we include an extra copy either of $f$ or of $g$.

\paragraph{Connecting subcircuits with permutations.}
We can generalize our example to more than two subcircuits via a \emph{permutation
network}.
A permutation network is a circuit that \emph{routes} each of its inputs to a distinct output.
The routing of a permutation network can be chosen by runtime values,
so if we connect each network output to the input of some subcircuit, and if we cyclically connect the output of each subcircuit to a network input,
then we can compose an arbitrary number of subcircuits in a data-dependent order.
Importantly, permutation networks can be built from only a quasilinear number
of gates; see \Cref{sec:circuit-constructions}.

This generalization becomes interesting when we add more connections between
subcircuits; see the above right-hand example.
In this example, we connect subcircuits via a permutation $\pi$, and we also
\emph{sequentially} connect the subcircuits.
By properly setting up the subcircuits, we can arrange that $f_0,...,f_3$ run sequentially,
and each $f_i$ can \emph{write} a wire value by sending it to $\pi$.

By properly choosing $\pi$'s programming string $\vb{s}$, we can arrange that
$\pi$ routes $f_i$'s written value to some other subcircuit $f_{j \neq i}$,
allowing $f_j$ to \emph{read} the write.
$f_j$'s output might (or might not, especially if the write is made by a future subcircuit) depend on the read value.
This \emph{almost} simulates a memory access:
we write values to memory cells (wires), and then read them at arbitrary later
points in the evaluation.

\paragraph{From permutations to memory.}
There are two gaps between our example and full-fledged memory.
The first gap is that in our example the permutation network routing decisions $\vb{s}$ are made \emph{globally} and independently of the execution of the subcircuits.
To achieve random access memory, each subcircuit must \emph{locally} select its own read address, requiring
that the routing of the permutation network be chosen on the fly, as the subcircuits run.
The second gap is that our example is limited in that each memory cell can only be read \emph{once}, due to our use of a permutation.
We discuss resolution of this second gap in \Cref{sec:overview-unit}.

To resolve the first gap,
\Cref{sec:circuit-constructions} constructs a permutation network which is \emph{dynamically programmed}.
The network uses typical butterfly network configurations of swap elements, and it is similar in structure to existing networks, e.g.~\cite{CheChe96,Bat68}. We show that these familiar
structures can be made to work properly in the context of a cyclic circuit.

Our emphasis in \Cref{sec:circuit-constructions} is in showing our network's crucial \emph{dynamic} property: in our network, each routing decision can be made based only on a \emph{prefix} of inputs to the network.
This dynamic property is merely a result of the order of dependencies between network wires: the routing of the $i$-th input simply does not depend on the routing of any subsequent input $j > i$.
Thus, our network can route each of its inputs eagerly, before subsequent inputs to the network are computed.
This allows each of our subcircuits to locally choose its own read address while
preventing our simulation from becoming `deadlocked' with two subcircuits each waiting on the choice of the other.

\subsection{Dynamic Parallelism}

The above discussion sketches how cyclic circuits can achieve sequential RAM, but
our goal is to construct \emph{parallel} RAM.
We consider a PRAM whose number of active processors can vary over the program execution.
Because of this, PRAM runtime can also vary.
Our cyclic circuit faithfully simulates this varying parallelism, matching runtime performance in its \emph{gate delay}, up to polylog factors.
If the simulated PRAM is highly parallel, then the circuit has low delay; if the PRAM is highly sequential, then the circuit has higher delay.

The following example circuit on the left illustrates how varying delay can be achieved:
\begin{center}
\begin{minipage}{0.35\textwidth}
\begin{center}
  \includegraphics{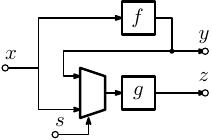}
\end{center}
\end{minipage}
\begin{minipage}{0.35\textwidth}
\begin{align*}
  (y, z) = \begin{cases}
    (f(x), g(f(x))) & \text{if } s = 0\\
    (f(x), g(x)) & \text{if } s = 1
  \end{cases}
\end{align*}
\end{minipage}
\end{center}

This circuit computes the function on the right.
If we set $s$ to zero, then the wire $z$ is a function of the \emph{sequential} composition of $f$ and $g$;
if we instead set $s$ to one, then $y$ and $z$ are functions of the \emph{parallel} composition of $f$ and $g$.
Due to eager semantics, the circuit has lower delay if it runs in
parallel than if it runs in sequence.
Namely, in the latter case, $g$ simply does not depend on the output of $f$, so $g$'s gates can start running even before $f$ is fully computed.

Note the distinction between circuit \emph{depth} and circuit \emph{delay}.
The latter accounts only for the longest chain of \emph{dependencies} through the circuit, and delay is the relevant runtime metric for cyclic circuits.

\subsection{Our PRAM Circuit}\label{sec:overview-sketch}

\begin{figure}[t]
  \begin{center}
  \includegraphics{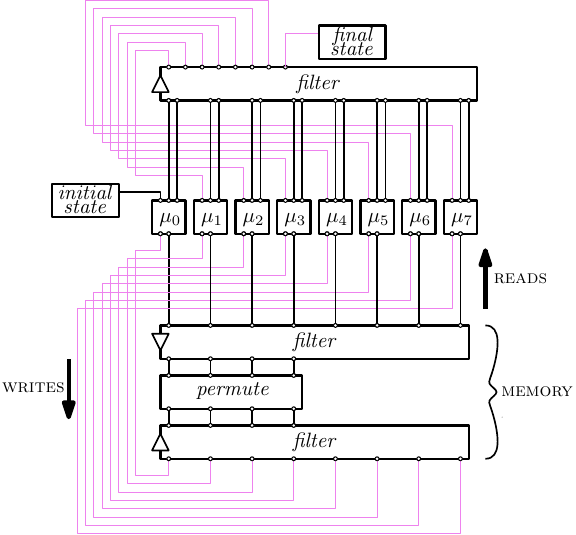}
  \end{center}
  \caption{
    Sketch of our circuit-based PRAM.
    The circuit has five main components:
    (1) a collection of compute units $\mu_i$,
    (2) a permutation network and two filters that jointly implement memory,
    (3) a filter that acts as a parallel coordination mechanism between compute units,
    (4) an input tape behind a filter (not depicted), and
    (5) an output tape behind a filter (not depicted).
  }\label{fig:pram}
\end{figure}

\Cref{fig:pram} depicts the high level structure of our PRAM-simulating circuit.
Our circuit focuses on the execution of \emph{compute units} $\mu_i$.
Each compute unit is a small circuit (its size is polylog in $n$) which can be understood as a handler for some single arbitrary basic
PRAM task, such as running (part of) one PRAM instruction.

PRAM execution starts at unit $\mu_0$, and it proceeds to the right through subsequent units.
For instance, if the PRAM behaves fully sequentially, then $\mu_0$ will compute some small task, then pass its state to $\mu_1$, which will compute some subsequent task, and so on.
The complexity of our construction comes from the \emph{coordination} of compute units.
We must arrange that (1) units can read/write a large shared memory and (2) units can run \emph{in parallel}.

\paragraph{Filters.}
To coordinate units, we apply permutation-like circuits called \emph{filters}:
\begin{center}
  \includegraphics{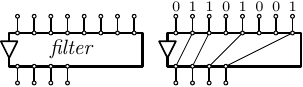}
\end{center}
A filter implements a routing between $n$ \emph{sources} and $n/2$ \emph{targets}.
Half of the sources should be tagged with zero, half should be tagged with one; those sources tagged with one are connected to the targets (preserving order),
and we `filter out' those sources tagged by zero.
\Cref{sec:circuit-constructions} constructs a dynamically routing filter with quasilinear size and logarithmic delay.

The role of filters in our PRAM is that they allow our compute units to \emph{opt
in} to particular machine capabilities.
For instance, our machine places its input tape behind a filter.
By sending a $1$ to the filter, a compute unit indicates that it would like to read a word of input, and the filter responds with the next input word; by sending a $0$, the unit indicates it does not need an input tape word, and the filter replies with an all-zero string.
Requiring units to explicitly opt in/out of each machine capability is tedious, but necessary.
This requirement ensures that early compute units do not ``block'' the operation of later units, since we can use the explicit opt-out to appropriately set up logical dependencies with Boolean gates.

Our filters require that precisely half of the units opt in and the other half opt out.
This requirement is not hard to achieve, but it requires that we use additional compute units which ``clean up'' machine execution by consuming unused resources.

\paragraph{Enabling dynamic parallelism via a filter.}
One important difference between \Cref{fig:pram} and our earlier example from \Cref{sec:overview-memory} is that in \Cref{fig:pram} we \emph{do not} connect compute units $\mu_i$ to each other directly.
Instead, connections pass through a filter (top of \Cref{fig:pram}).
This filter acts as a coordination mechanism, and its operation is key to our dynamic parallelism.
A compute unit can send a message through the filter, and this message is routed to some subsequent compute unit.
This allows compute units to pass state to successor units, activating those successors and continuing the computation.

Notice that each compute unit $\mu_i$ is connected to the
source side of the coordination filter \emph{twice}.
This is crucial. By connecting each unit twice, we allow unit $\mu_i$ to activate \emph{up to two} children.
$\mu_i$ can decide dynamically how many children it will activate by tagging modified copies of its state with $0$ or $1$ before sending them to the filter.
Think of $\mu_i$ as representing a single execution step of some parallel process.
$\mu_i$'s number of children represents three possible continuations of this process:
\begin{itemize}
  \item Zero children: The process terminates.
  \item One child: The process continues (in parallel with other processes).
  \item Two children: The process \emph{forks} into two parallel processes.
\end{itemize}
A program can quickly increase parallelism by having each unit activate two children.
For example,
\begin{itemize}
  \item $\mu_0$ sends two states through the filter, activating $\mu_1$ and $\mu_2$ in parallel.
  \item $\mu_1$ and $\mu_2$ each send two states through the filter.
    $\mu_1$'s states arrive at $\mu_3$ and $\mu_4$, and $\mu_2$'s states arrive at $\mu_5$ and $\mu_6$.
  \item $\mu_3,...,\mu_6$ each send two states through the filter, and so on.
\end{itemize}
Note that our circuit is fully synchronous and deterministic.
For instance, in this particular execution, $\mu_1$'s children are \emph{not} chosen arbitrarily.
Each unit $\mu_i$ will always have children of the lowest possible
index, with priority given to parents with lower indices (i.e., priority is given to \emph{older} processes).
This determinism comes simply from the fact that our construction is indeed a Boolean
circuit, albeit with cycles.

Parallelism in our circuit arises from the low delay of its components.
For instance, the coordination filter has logarithmic delay.
Hence, even if some huge number of units `simultaneously' request children, all
requests are handled within log delay.

\subsection{Programming Compute Units}\label{sec:overview-unit}

\begin{figure}
  \centering
  \includegraphics{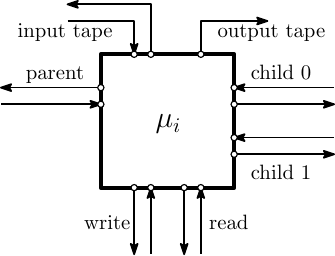}
  \caption{
    Structure of compute units $\mu_i$.
    Each unit activates when it receives a local state from its parent.
    It then performs some combination of the following:
    write a value to memory,
    read a value from memory,
    activate up to two children,
    read a word from the input tape,
    write a word to the output tape,
    compute some function of local state.
    By properly programming compute units, we achieve PRAM.
  }\label{fig:unit}
\end{figure}

By construction, our compute units have access to the various capabilities of our machine, including reading/writing shared memory, reading the input tape, writing the output tape, activating children, and responding to the parent; see \Cref{fig:unit}.
These capabilities are sufficient to implement our target PRAM.

However, appropriately `programming' these compute units $\mu_i$ is nontrivial.
The challenge here is that our circuit implements memory via a \emph{permutation}.
Each unit can write a memory element by sending it to the permutation network, but because we use a permutation, only \emph{one} unit can read that write.
In other words, our circuit's memory cells are inherently \emph{single-use}.
Of course, full-fledged PRAM allows repeated reads/writes to each memory address, and we must account for this discrepancy in our simulation.

\paragraph{The PSAM model.}
Our observation is that while we cannot use compute units to \emph{directly} run PRAM instructions, we \emph{can} use compute units to run simple parallel programs that manipulate binary-tree-based data structures.
Single-use memory cells are sufficient for this task because we can store pointers to binary tree child nodes inside parent nodes; each time we read a tree node from single-use memory, we can write back a fresh copy of that node.
With some care, we can use a tree-based program to implement PRAM.

We formalize the ability to manipulate binary trees by introducing a model that we call the parallel single access machine (PSAM) model.
In short, this model is the same as PRAM (\Cref{sec:word-pram}), except that each memory address can only be written to once and read from once.
See \Cref{sec:psam} for details.

Thus, we decompose our simulation into two parts.
\begin{itemize}
  \item First, we show that cyclic circuits can simulate PSAM. Each of our compute units simulates a single PSAM instruction, and we glue our units with permutations and filters.
  \item Second, we show that PSAM can simulate PRAM. Our PSAM maintains binary trees that store PRAM data, and by repeatedly traversing the trees, our PSAM simulates PRAM.
\end{itemize}
Plugging these together yields our contribution.

\paragraph{Our PSAM program.}
Our PRAM simulation uses circuit-based compute units to run a small PSAM
program that manipulates two trees.
The first \emph{memory tree} holds at its leaves all words that have been written in the PRAM shared memory;
the second \emph{processor tree} holds at its leaves the state of each active PRAM processor.
The high level goal of our PSAM program is to in parallel traverse the two trees together, matching up the state of each processor with the memory element
that processor wishes to access.
By doing so, we can take one step in that processor's execution.
By repeatedly traversing and rebuilding these two trees, we simulate a full PRAM execution.

In more detail,
the memory tree is a log-depth binary tree where each leaf along some path $i$ encodes the memory value written to address $i$.
The processor tree is also a log-depth binary tree, and each of its leaves stores the local state of some active PRAM processor.
The processor tree is arranged such that each processor's state \emph{aligns} with the memory address that processor wishes to access.
We sketch an example:
\begin{center}
  \includegraphics{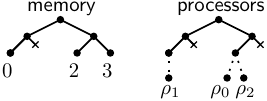}
\end{center}
Here, we consider a small memory with three addresses and three processors
$\rho_0, \rho_1, \rho_2$.
In our example, $\rho_1$ wishes to access memory address $0$ and $\rho_0, \rho_2$ each wish to access address $2$.
We ensure that $\rho_1$'s state is in the subtree rooted at position $0$, and $\rho_0$'s and $\rho_2$'s states are each in the subtree rooted at position $2$.
Our memory tree does not have an entry for address $1$ because no processor has yet accessed that address.
We implicitly store the all zeros string in each unaccessed address.

Our PSAM program implements a single PRAM machine step via a recursive procedure.
The PSAM procedure is roughly as follows:
\begin{itemize}
  \item Simultaneously and recursively traverse the memory tree and the processor tree.
    When the current processor tree node has two children, the current PSAM processor forks execution, and we continue in parallel down both branches of the memory/processor tree.
  \item Suppose we reach a memory leaf storing value $x_i$ (or an empty tree, denoting implicitly that $x_i = 0$).
    Save $x_i$ in the PSAM processor's local storage, then continue traversing the processor tree.
    When the current processor tree node has two children, fork execution,
    broadcasting $x_i$ to each child.
  \item Ultimately, the PSAM processor arrives at a leaf storing some PRAM processor state $\rho$.
    We compute a single PRAM instruction based on state $\rho$ and value $x_i$.
    The instruction writes\footnote{
      If the PRAM processor is merely reading a memory value, it can write back whatever it just read.
    } back some memory value $x_i'$.
    We encode the written value into a sparse tree with exactly one leaf, and this leaf is stored on path $i$.
    Additionally, the instruction can create zero, one, or two subsequent PRAM processor states.
    We encode these states into a tree with zero, one, or two leaves,
    each rooted at the memory address they wish to access next.
    Thus, we create a fresh memory tree and a fresh processor tree.
  \item
    The recursion begins to unwind.
    As it unwinds, we \emph{merge} together memory trees (resp. processor trees) created by recursive calls.
    Consider a merge on trees $t_0, t_1$.
    Our merge operation $\uplus$ ensures that each leaf node in $t_0$ (resp. $t_1$) appears on the same path in the merged tree $t_0 \uplus t_1$ as it does in $t_0$ (resp. $t_1$).
    When two merged trees share some leaf position, we combine those leaves with some binary associative operator $\star$.
    We can ensure that no two merged processor trees share leaves, so only memory tree merges use $\star$, and this is how we resolve write conflicts (see \Cref{sec:word-pram}).
    Computing $t_0 \uplus t_1$ is \emph{efficient} because we ensure that $t_0, t_1$ are sparse and/or $t_0,t_1$ share few leaf positions.
\end{itemize}

When the recursion completes, we are left with a fresh memory tree and a fresh processor tree.
The fresh memory tree stores the combined writes from each processor, as well untouched content from the initial memory tree;
the fresh processor tree stores all updated active processor states, and those states are rooted at the memory address they wish to access next.
Thus, we are back in a state accepted by our procedure, and we can apply the procedure again to implement another machine step.
Each call to this procedure takes one step in each processor, and it runs in $O(\log n)$ time.
By calling the procedure until the processor tree is empty, we simulate an end-to-end PRAM execution.

\subsection{Our Asymptotics}

Recall from \Cref{thm:main} that our circuit has size $O(W(n) \cdot \log^4 n)$ and delay $T \cdot O(\log^3 n)$.
We explain the sources of these factors.

Our circuit simulates $W(n)$ PRAM instructions, which in turn requires that it simulate $O(W(n) \cdot \log n)$ PSAM instructions.
This log factor loss comes from the fact that our PSAM program repeatedly traverses log-depth binary trees (\Cref{sec:overview-unit}).
Hence, each PRAM instruction is simulated by $O(\log n)$ PSAM instructions.

These PSAM instructions are, in turn, handled by $O(W(n) \cdot \log n)$ compute units (\Cref{fig:unit}).
Recall that these units must read/write memory elements, and this leads to the bottleneck in our circuit's size.
In particular, we connect these units by instantiating a permutation network, and to permute $m$ inputs our network uses $O(m \cdot \log^2 m)$ swap operations.
Because the network handles $\Theta(\log n)$-bit words, each swap requires $\Theta(\log n)$ Boolean gates.
Plugging in $m = O(W(n) \cdot \log n)$ as the number of permuted inputs, the size of the network is $O(W(n) \cdot \log^4 n)$ Boolean gates, dominating our circuit's size.

In terms of delay, our circuit incurs cost from
(1) the $O(\log^2 n)$ delay of our permutation network,
(2) the $O(\log n)$ overhead of the simulation of PRAM by PSAM, and
(3) the inherent $T$ steps required by the PRAM program itself.
Combining these results in our $T \cdot O(\log^3 n)$ total delay.

Thus, our cyclic-circuit-based simulation indeed achieves performance within polylog factors of the target PRAM.
The following sections expand on discussion given here, presenting our circuits and our simulation in full detail.

\section{Preliminaries}\label{sec:prelims}

\subsection{Word Parallel RAM}\label{sec:word-pram}

Our target model is a CRCW PRAM with bounded word size.
The PRAM allows processors that vary in number across program steps, and the PRAM combines write conflicts with an associative operator $\star$.
The following explains in detail.

\paragraph{Terminology.}
The PRAM's input length is denoted $n$.
The PRAM manipulates words of size $w = \Theta(\log n)$ bits.
PRAM input is stored on a tape;
when a processor reads the input tape, that word is popped from the tape such that when another processor reads the input tape, it obtains the next input word.
We similarly store the PRAM output on an output tape.

We place a modest bound on the PRAM's addressable main memory:
the highest memory address is at most polynomial in $n$.
This ensures that (1) our log-length words can address the shared memory and (2) the shared memory can be neatly encoded as a binary tree with logarithmic depth.

We place no bound on the machine's maximum number of processors $p$.
Each PRAM processor can be either \emph{active} or \emph{inactive}.
We refer to the number of steps for which a processor has been consecutively active as its \emph{age}.
If processor $\rho_0$ activated processor $\rho_1$, then we call $\rho_0$ `parent' and $\rho_1$ `child'.
If a processor goes inactive, it no longer has a parent (until it is activated again).

Each processor runs the same constant-sized program with instructions indexed by natural numbers, though different processors can run different instructions on the same step (the model is MIMD).
Each processor has a small\footnote{Our circuit cannot handle large processor local state without harming its asymptotics.
This is not a serious limitation, as processors can store local state in shared memory.}
 local state storing $O(1)$ words. Local state contains one distinguished word called the \emph{program counter}.
The program counter indicates which instruction to run next.

\paragraph{Complexity.}
The machine's \emph{runtime} $T$ denotes the total number of machine steps before the program halts.
The machine's \emph{work} $W$ denotes the total number of instructions executed across all processors before the machine halts.

\paragraph{Syntax and semantics.}
At each step, each active processor runs an instruction according to its program counter.
Instructions are chosen from the following grammar:
\begin{align*}
  \var{instr}
  ::=~&\vb{x} \gets f(\vb{x})   & \text{update local state} \\
    |~&y \gets \keyword{read}~x & \text{read address $x$; save the result in $y$} \\
    |~&\keyword{write}~x~ y     & \text{write word $y$ to address $x$} \\
    |~&x \gets \keyword{input}  & \text{read one word from the input tape} \\
    |~&\keyword{output}~x       & \text{write one word to the output tape}\\
    |~&\keyword{fork}~f(\vb{x}) & \text{activate a processor with local state $f(\vb{x})$}\\
    |~&\keyword{die}            & \text{go inactive}
\end{align*}
Above, $\vb{x}$ refers to the processor's entire local state, and metavariables $x, y$ refer to individual words in the local state.
Metavariable $f$ ranges over arbitrary functions that transform local state;
$f$ must be expressible as a polylog-uniform (in $n$) cyclic circuit (see \Cref{sec:ckt}) with $O(\log^2 n)$ gates and $O(\log^2 n)$ delay.
This is sufficient for addition, subtraction, comparison, multiplication, etc.
Notably, $f$ may manipulate the program counter, allowing conditional branching.

Machine execution begins with a single active processor in an all zeros local state, and where the shared memory stores all zeros.
If every processor is inactive, the machine halts.

\paragraph{Conflict resolution.}
When more than one processor writes the same address, the machine aggregates written values using an associative operation $\star$.
$\star$ can be instantiated by a polylog-uniform (in $n$) cyclic circuit with at most $O(\log^2 n)$ gates and $O(\log^2 n)$ delay.
This is sufficient to aggregate by, e.g., adding, multiplying, taking the first, taking the maximum, etc.

Since $\star$ is not necessarily commutative, the order in which the machine combines values matters.
Similarly, multiple processors might simultaneously read from the input tape/write to the output tape.
The machine resolves such conflicts according to processor age, where older processors receive priority;
ties are broken by the age of the processor's parent at the time the child was activated, then by the age of the grandparents, and so on.
Since machine execution starts with one processor, and since each processor can only fork one child at a time, this resolution is unambiguous.

When two or more processors read the input tape in the same step, the processor with the highest priority pops the first remaining word of the tape, the processor with the second-highest priority pops the next word, and so on.
Writing to the output tape is handled in the same manner, with higher priority processor output appearing first.

\subsection{Notation}

\begin{itemize}
  \item All logarithms are base two.
  \item Vectors are written in bold: $\vb{x}$.
  \item Vectors are indexed using bracket notation: $\vb{x}[i]$.  Indexing starts at zero.
  \item $\vb{x}[i..]$ denotes the subvector of $\vb{x}$ starting from index $i$.
    $\vb{x}[i..j]$ denote the subvector of $\vb{x}$ starting from index $i$ and ending at index $j$, inclusive.
  \item $[~]$ denotes an empty vector and $[x]$ denotes a singleton vector holding $x$.
  \item $\vb{x} \concat \vb{y}$ denotes the concatenation of $\vb{x}$ and $\vb{y}$.
  \item $x \defn y$ denotes that $x$ is equal to $y$ by definition.
  \item `msb' stands for `most significant bit'; `lsb' stands for `least significant bit'.
  \item We view index zero as the msb of a vector, as it is furthest to the left.
\end{itemize}

\section{Cyclic Circuits}\label{sec:ckt}

We simulate PRAM with a cyclic circuit.
This section formalizes cyclic circuits, and we explain the semantics and complexity measures of the model.

\subsection{Syntax and Semantics}

For concreteness, we choose a particular gate set, allowing AND/XOR circuits with cycles:
\begin{definition}[Cyclic Circuit]\label{defn:ckt}
  A cyclic circuit $C$  is a circuit
  allowing cycles (i.e., its wiring graph need not be acyclic)
  composed from fan-in two AND/XOR gates.
  $C$ has $n$ input wires and $m$ output wires.
  $C$ may use a distinguished wire $1$ that
  holds constant $1$.
  Each wire in $C$ has a distinct identifier from some set $\var{wire-id}$.
\end{definition}

The semantics of cyclic circuits are defined by stating which values can appear
on each circuit wire.
An \emph{assignment} (defined next) is a \emph{map} from wires to values such that assigned values satisfy constraints imposed by gates.

\begin{definition}[Assignment]\label{defn:assign}
  Let $C$ be a cyclic circuit and let $\vb{x} \in \{0,1\}^n$ be an input.
  An \textbf{assignment}  for $C, \vb{x}$ is a map
  $\var{asgn} : \var{wire-id} \rightarrow \{0, 1\}$
  that sends e of $C$'s wires to a value.
  An assignment $\var{asgn}$ is considered \textbf{valid} if
  (1) each $i$-th input wire is sent to corresponding input value $\vb{x}[i]$, and
  (2) the output wire of each gate $g$ is related to $g$'s input wires
  according to $g$'s function:
  \begin{center}
  \begin{minipage}[t]{0.3\textwidth}
  \begin{center}
    \begin{tabular}{c|cc}
      $\oplus$ & $0$ & $1$\\
      \hline
      $0$ &  $0$ & $1$\\
      $1$ &  $1$ & $0$
    \end{tabular}
  \end{center}
  \end{minipage}
  \quad
  \begin{minipage}[t]{0.3\textwidth}
  \begin{center}
    \begin{tabular}{c|cc}
      $\cdot$ & $0$ & $1$\\
      \hline
      $0$ & $0$ & $0$\\
      $1$ & $0$ & $1$
    \end{tabular}
  \end{center}
  \end{minipage}
  \end{center}
\end{definition} 
We emphasize that AND outputs zero if either of its arguments is zero.
This captures the eager nature of AND, enabling our constructions.

\paragraph{Legal circuits.}
Consider a simple circuit $C$ defined as follows:
\begin{align}\label{eqn:loop}
  & \keyword{let}~y = x \cdot y~\keyword{in}~y
\end{align}
So far, $C$ is well-defined with respect to \Cref{defn:ckt}:
it is a single AND gate whose second input wire is its own output:
\begin{center}
  \includegraphics{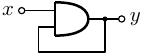}
\end{center}
However, this circuit is problematic.
Suppose we consider input $x = 1$.
The pair $C, x$ admits two valid assignments:
\begin{align*}
  \{x \mapsto 1 , y \mapsto 0\} \qquad
  \{x \mapsto 1 , y \mapsto 1\}
\end{align*}
Indeed, both settings of $y$ satisfy the AND gate constraint.

While it is possible to consider cyclic circuits with multiple valid assignments, it is far simpler to only consider those circuits that have exactly one assignment per input. We say that such circuits are \emph{legal}.
\begin{definition}[Legal Cyclic Circuit]\label{defn:legal}
  A cyclic circuit $C$ is considered \textbf{legal} if for any input $\vb{x} \in \{0, 1\}^n$, the pair $C, \vb{x}$ has exactly one valid assignment (\Cref{defn:assign}).
  If $C$ is not legal, it is considered \textbf{illegal}.
\end{definition}
Henceforth and outside this section, all considered cyclic circuits are legal.
For instance, \Cref{thm:main} refers to legal cyclic circuits, and the example circuits we considered in \Cref{sec:overview} were legal.

\begin{notation}[Wire values]
  When $C$ and $\vb{x}$ are clear from context, we denote the single
  assignment for $C, \vb{x}$ by $\var{val}$.
  We denote by $C(\vb{x})$ the string of values in the image of $\var{val}$ corresponding to $C$'s output wires.
\end{notation}

\subsection{Complexity Measures}

In standard \emph{acyclic} Boolean circuits, we typically measure circuit complexity via size and depth.
In \emph{cyclic} circuits, we instead measure size and \emph{delay}.
The size of a cyclic circuit $|C|$ is simply its number of gates.
The \emph{delay} of a wire measures the time needed before that wire acquires its value.
We assume that each gate takes unit time to propagate input to output.
Wire delay \emph{depends on circuit input}:
\begin{definition}[Wire Delay]\label{defn:delay}
  Let $C$ be a cyclic circuit with input $\vb{x} \in \{0, 1\}^n$.
  The \textbf{wire delay} of $C, \vb{x}$ is a map
  $\var{delay} : \var{wire-id} \rightarrow \mathbb{N}$
  that sends each wire to the lowest value
  satisfying the following constraints:
  \begin{align*}
    \var{delay}(w_0 \oplus w_1) &= 1+\var{max}(\var{delay}(w_0), \var{delay}(w_1))\\
    \var{delay}(w_0 \cdot w_1) &\geq 1+\var{min}(\var{delay}(w_0), \var{delay}(w_1))\\
    \var{delay}(w_0 \cdot w_1) &\geq 1+\var{delay}(w_0) & \text{if $\var{val}(w_1) \neq 0$}\\
    \var{delay}(w_0 \cdot w_1) &\geq 1+\var{delay}(w_1) & \text{if $\var{val}(w_0) \neq 0$}
  \end{align*}
\end{definition}
As an example, $\var{delay}$ maps each input to $0$, as this is the lowest natural number and as \Cref{defn:delay} places no further constraint.
The delay of an AND gate depends on its inputs, reflecting the gate's eager nature.
If an AND gate's faster input holds zero, then the gate has low delay; if not, the gate has higher delay, as it must wait until its slower input acquires its Boolean value.

Note that because each wire's value is uniquely determined by the circuit input (\Cref{defn:legal}), it is relatively straightforward that each wire must have finite delay.

\paragraph{Circuit delay.}
Formally, wire delay is a measure for \emph{wires}, but it is
also useful to discuss the delay of a \emph{circuit}.
The delay of $C$ with respect to input $\vb{x} \in \{0,1\}^n$ is the maximum delay of $C$'s wires.
We also consider $C$'s delay without a specified input $\vb{x}$, which is defined as the highest
delay over all inputs $\vb{x} \in \{0,1\}^n$.

\subsection{Uniformity}

We are interested not only in showing that cyclic circuit families can simulate PRAM, but also the other way around.
The former is far more interesting, but the latter is needed to tightly connect the two models.
To show a simulation of cyclic circuits by PRAM, we need a notion of uniformity such that a PRAM can efficiently compute the description of a given circuit.

\begin{notation}[Polylog Uniform/Computable]\label{notation:uniform}
  Say that a circuit family $\{ C_n:n \in \mathbb{N} \}$ is polylog uniform if -- upon input $n$ and $i$ -- a random access Turing machine can compute the description of $C_n$'s $i$-th gate in time
  $O(\mathrm{poly}(\log n))$.
  Similarly, say a quantity $f(n)$ is polylog computable if the quantity can be computed by a random access Turing machine in time $O(\mathrm{poly}(\log n))$.
\end{notation}

This uniformity is convenient for connecting PRAM and cyclic circuits, because it means PRAM can work-efficiently simulate cyclic circuits.
Note that the more standard logspace uniformity condition is not sufficient for our purposes,
because all we can conclude about a logspace program is that it must halt in \emph{polynomial} time, and our theorems cannot tolerate polynomial work blow-up.
Another standard notion is DLOGTIME uniformity, but this notion seems insufficient to describe our circuits without blowing up the circuit size by a polynomial factor.

We can now more formally state \Cref{cor:pram-and-ckts}.
\begingroup
\def\thecorollary{\ref{cor:pram-and-ckts}}
\begin{corollary}[PRAM and Cyclic Circuits]
  Denote by $\var{PRAM}(W(n), T(n))$ the set of problems solvable by a
  bounded-word-size PRAM (\Cref{sec:word-pram}) within $O(W(n))$ work and $O(T(n))$ time.
  Denote by $\var{CCKT}(W(n), T(n))$ the set of problems solvable by a 
  polylog-uniform cyclic circuit family $\{ C_n:n \in \mathbb{N} \}$ where $C_n$ has $O(W(n))$ size and computes its output within $O(T(n))$ delay.
  For $W(n) = O(\mathrm{poly}(n))$ s.t. $W(n)$ is polylog-computable, and for all $T(n)$:
  \begin{align*}
    \var{PRAM}&(W(n) \cdot \mathrm{poly}(\log n), T(n) \cdot \mathrm{poly}(\log n))\\
    =\var{CCKT}&(W(n) \cdot \mathrm{poly}(\log n), T(n) \cdot \mathrm{poly}(\log n))
  \end{align*}
\end{corollary}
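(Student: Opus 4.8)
The statement is a set equality, so I would prove it as two inclusions, $\var{PRAM}(\cdots) \subseteq \var{CCKT}(\cdots)$ and $\var{CCKT}(\cdots) \subseteq \var{PRAM}(\cdots)$, where $\cdots$ abbreviates the common argument $(W(n)\cdot\mathrm{poly}(\log n),\, T(n)\cdot\mathrm{poly}(\log n))$. The forward inclusion is essentially just \Cref{thm:main}. Given a problem in $\var{PRAM}(W(n)\cdot\mathrm{poly}(\log n), T(n)\cdot\mathrm{poly}(\log n))$, fix a deciding PRAM program $\mathcal{P}$ with the stated work/time. Applying \Cref{thm:main} yields a cyclic circuit family $\{C_n\}$ of size $O(W(n)\cdot\mathrm{poly}(\log n)\cdot\log^4 n) = W(n)\cdot\mathrm{poly}(\log n)$ and delay $T(n)\cdot\mathrm{poly}(\log n)\cdot\log^3 n = T(n)\cdot\mathrm{poly}(\log n)$, which is exactly a member of $\var{CCKT}(W(n)\cdot\mathrm{poly}(\log n), T(n)\cdot\mathrm{poly}(\log n))$; the $\log^4 n$ and $\log^3 n$ factors are absorbed by the $\mathrm{poly}(\log n)$ slack that both classes already carry. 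The only extra obligation is to observe that the circuit produced by \Cref{thm:main} is \emph{polylog-uniform} in the sense of \Cref{notation:uniform}, which follows because the construction is fully explicit: each gate's description is determined by a fixed formula in the gate index, computable by a random-access Turing machine in $O(\mathrm{poly}(\log n))$ time.

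For the reverse inclusion (the ``relatively obvious'' \Cref{thm:reverse}), I would build a PRAM that evaluates a polylog-uniform legal cyclic circuit family $\{C_n\}$ of size $O(W(n))$ and delay $O(T(n))$. The hypotheses $W(n)=O(\mathrm{poly}(n))$ and $W(n)$ polylog-computable are used here: they guarantee that gate/wire indices fit in $\Theta(\log n)$-bit PRAM words and that the machine can allocate the correct number of processors. The simulation has two phases. \emph{Construction:} assign one processor per gate; by polylog uniformity each processor computes its gate's description in $O(\mathrm{poly}(\log n))$ time, and then the machine builds a reverse adjacency structure (for each wire, the list of gates consuming it) by sorting the $O(W(n))$ input edges. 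This costs $O(W(n)\cdot\mathrm{poly}(\log n))$ work and $O(\mathrm{poly}(\log n))$ time. \emph{Evaluation:} run an \emph{event-driven frontier simulation} of eager evaluation. Maintain a frontier of gates that are ready to fire; initialize it with the input wires (delay $0$). Each round, all frontier gates fire in parallel, computing their outputs exactly per the gate semantics --- an $\oplus$-gate fires when both inputs are ready, while a $\cdot$-gate fires eagerly with output $0$ as soon as some input is $0$ and otherwise when both inputs are ready. When a gate fires it notifies its consumers, and a consumer enters the next frontier precisely when its firing condition is met. By design this reproduces the constraints of \Cref{defn:delay}, so the wires computed in round $t$ are exactly those of delay $t$, and the number of rounds equals the circuit's delay on the given input, which is $O(T(n))$.

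For the accounting: legality (\Cref{defn:legal}) guarantees every wire has a unique value and finite delay, so each gate fires \emph{exactly once}, and the total number of notifications equals the number of circuit edges, which is $O(W(n))$ since fan-in is two. Hence the aggregate work is $O(W(n)\cdot\mathrm{poly}(\log n))$ and the total time is $O(T(n)\cdot\mathrm{poly}(\log n))$, placing the problem in the claimed class. The main obstacle is precisely this work-efficiency. A naive round-indexed evaluation that re-scans all $O(W(n))$ gates in each of the $O(T(n))$ rounds would cost $\Theta(W(n)\cdot T(n))$ work, which violates the bound. To avoid this I must process \emph{only} the active frontier each round, which forces me to maintain dynamic worklists on a PRAM: I would use prefix-sum compaction to gather the frontier and to deduplicate a gate that is notified by both of its just-fired inputs in the same round, and I must dispatch one processor per outgoing notification to handle \emph{unbounded} fan-out (a single wire may feed many consumers) while staying inside the aggregate $O(W(n))$ edge budget. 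These are standard PRAM primitives, but assembling them so that the per-round overhead is $O(\mathrm{poly}(\log n))$ and the total work remains $O(W(n)\cdot\mathrm{poly}(\log n))$ is the delicate part of the argument.
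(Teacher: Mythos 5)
Your proposal is correct and takes essentially the same route as the paper: the forward inclusion is \Cref{thm:main} plus the observation (made explicitly in the paper's uniformity discussion) that the constructed family is polylog-uniform, and the reverse inclusion is exactly the paper's proof of \Cref{thm:reverse} --- a work-efficient, event-driven eager evaluation in which each gate is touched only $O(1)$ times, so total work is $O(W(n)\cdot\mathrm{poly}(\log n))$ and rounds track wire delay for $O(T(n)\cdot\mathrm{poly}(\log n))$ time. The only difference is mechanical: where you precompute a reverse adjacency structure by sorting and compact each frontier with prefix sums, the paper dispatches notifications by having each computed-wire processor recursively fork children in a binary-tree fashion to cover fan-out (using gate markers, exploiting PRAM synchrony, to decide which processor claims a gate's output wire), achieving the same accounting.
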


\subsection{Simulating Cyclic Circuits with PRAM}\label{sec:sim-cckt}

The goal of this work is to simulate PRAM with a cyclic
circuit, but the other direction is also needed to establish a strong connection between the two models.
The following is a relatively straightforward fact:
\begin{theorem}[Cyclic Circuits from PRAM]\label{thm:reverse}
  Let $\{C_n : n \in \mathbb{N} \}$ be a polylog-uniform
  cyclic circuit family
  such that each $C_n$ has size at most $W(n) = O(\mathrm{poly}(n))$.
  There exists a PRAM program $\mathcal{P}$ such that for any length-$n$ input $\vb{x}$,
  $\mathcal{P}(\vb{x})$ outputs $C_n(\vb{x})$ within $O(W(n) \cdot \mathrm{poly}(\log n))$ work.
  If on a particular input $\vb{x}$, $C_n(\vb{x})$ outputs within $T$ delay, then $\mathcal{P}(\vb{x})$ halts within time at most $T \cdot O(\mathrm{poly}(\log n))$.
\end{theorem}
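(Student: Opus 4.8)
The plan is to simulate the cyclic circuit $C_n$ by a PRAM that repeatedly computes wire values in rounds, where each round corresponds to one unit of circuit delay. The key conceptual point is that the delay function (\Cref{defn:delay}) gives a natural schedule: a wire with $\var{delay}(w) = d$ can be assigned its correct value once we know the values of all wires that ``caused'' it at delay $< d$. So I would have the PRAM maintain, in shared memory, a partial assignment $\var{asgn}$ that is initialized to undefined (or to a sentinel value) on all non-input wires and to $\vb{x}[i]$ on the $i$-th input wire. In each round, every gate whose output is still undefined checks whether it can now fire according to the eager semantics, and if so writes its output value.

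First I would set up the representation. Using the polylog-uniformity assumption, the PRAM computes, for each gate index $i \in \{0, \dots, W(n)-1\}$, the description of gate $i$ (its type, the wire-ids of its two inputs, and the wire-id of its output) in $O(\mathrm{poly}(\log n))$ time each; since there are $O(W(n))$ gates this costs $O(W(n) \cdot \mathrm{poly}(\log n))$ work total, and because the gates are independent this fits in $O(\mathrm{poly}(\log n))$ time by assigning one processor per gate (the PRAM's \keyword{fork} capability lets us spawn $O(W(n))$ processors in $O(\log W(n)) = O(\log n)$ time by a forking tree, which is absorbed into the polylog factors). I would store the partial assignment in shared memory indexed by $\var{wire-id}$; since there are $O(W(n))$ wires and memory addresses are polynomial in $n$, this is addressable by the $\Theta(\log n)$-bit words.

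Second, the main loop. In each round I assign one processor to each gate. A gate can \emph{fire} (determine its output) exactly when the eager semantics permit: an XOR gate fires once both inputs are defined; an AND gate fires once either input is defined to be $0$, or once both inputs are defined. When a gate fires, its processor writes the computed output value to the output wire's memory cell. Crucially, \Cref{defn:delay} guarantees that after round $d$ every wire of delay $\le d$ has been assigned, by induction: the delay constraints are exactly the ``a wire becomes ready one step after its triggering input(s)'' relations, so the round-by-round fixpoint computation converges to the unique valid assignment (\Cref{defn:legal}), and the number of rounds needed on input $\vb{x}$ is exactly the circuit delay $T$ on that input. Hence the loop runs $T$ rounds, each round taking $O(\mathrm{poly}(\log n))$ time (one parallel sweep over gates plus the $f$/gate evaluation, each of bounded cost), giving total time $T \cdot O(\mathrm{poly}(\log n))$ and total work $O(W(n)) \cdot T$ per naive accounting. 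To avoid the spurious factor of $T$ in the \emph{work} bound, I would only charge work for gates that actually fire in a given round, or equivalently have each gate go inactive (via \keyword{die}) once it has fired and reactivate only the constant number of successor gates; this ensures each gate is processed $O(1)$ times across the whole computation, so total work is $O(W(n) \cdot \mathrm{poly}(\log n))$ as required.

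\textbf{The hard part} is the work accounting, not the time or correctness. A naive ``sweep all gates every round'' costs $\Theta(W(n) \cdot T)$ work, which violates the theorem's $O(W(n) \cdot \mathrm{poly}(\log n))$ work bound whenever $T$ is more than polylogarithmic. The fix is event-driven scheduling: when a wire acquires its value, its processor should activate precisely the (constant number of) gates that read that wire, so that each gate is ``woken'' only a constant number of times. This requires precomputing a fan-out adjacency structure (the inverse of the gate-input map), which is again a polylog-uniform, $O(W(n) \cdot \mathrm{poly}(\log n))$-work preprocessing step, and it requires care so that a gate woken by its first-arriving input but not yet able to fire (e.g. an AND gate whose one known input is $1$) correctly gets re-woken when its second input arrives; charging each wake to the arrival of a distinct input wire value keeps the total number of wakes $O(W(n))$. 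Once this event-driven bookkeeping is in place, correctness follows from the fixpoint/delay induction above and legality of $C$, and both the work and time bounds fall out directly.
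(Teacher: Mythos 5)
Your proposal is correct and, after discarding the naive round-based sweep, lands on essentially the same argument as the paper: use polylog-uniformity to build the gate descriptions in parallel, then do event-driven propagation where a processor representing a newly determined wire wakes the gates reading it, charging each gate once per input so total work is $O(W(n)\cdot\mathrm{poly}(\log n))$ and each unit of delay costs $O(\mathrm{poly}(\log n))$ PRAM time. One small slip: a wire's fan-out is \emph{not} constant in general (what is constant is each gate's fan-in), so waking successors requires the binary-tree forking the paper uses explicitly---which your forking-tree and total-fan-in accounting already accommodate, so the bounds are unaffected.
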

\begin{proof}
  Straightforward from the uniformity of the circuit family.

  First, $\mathcal{P}$ computes the description of $C_n$ in $O(\mathrm{poly}(\log n))$ time and $O(W(n) \cdot \mathrm{poly}(\log n))$ work.
  The description can be computed efficiently in parallel due to polylog-uniformity.
  
  Next, $\mathcal{P}$ simulates each gate.
  As its invariant, $\mathcal{P}$ maintains a set of processors, each of which
  represents a wire whose Boolean value has already been computed.
  $\mathcal{P}$ sets up this invariant by assigning one processor to each circuit input wire.
  Then, each wire processor handles fan-out by
  (recursively) forking children such that each connected gate has a corresponding processor.
  This processor attempts to evaluate the gate.
  If the gate's output is indeed determined by the current configuration of its
  input wires, then the processor marks the gate as handled and becomes the handler of the gate's output wire.
  (To avoid two processors handling the same output wire, the processors use gate markers to decide who takes control of the output wire; this is possible due to PRAM's synchronous nature.)
  If the gate's output is not yet determined, the processor simply saves its
  wire value on the gate input and goes inactive.

  By handling wire fan-out in a binary tree fashion, every wire value is propagated
  to each connected gate in $O(\log n)$ time, and every gate is touched only twice (once per gate input).
  Thus, the total runtime is $T \cdot O(\mathrm{poly}(\log n))$ and the total work is $O(W(n) \cdot \mathrm{poly}(\log n))$.
\end{proof}

\section{Parallel Single Access Machines}\label{sec:psam}

There seems to be some natural tension between the PRAM model -- which allows arbitrary re-use of stored data values --
and the cyclic circuit model -- where each wire can only be read by statically connected gates.
We accordingly introduce an intermediate model: parallel single access machines (PSAMs).
PSAM seems to be more naturally compatible with cyclic circuits because each memory address can be written to/read from only once, a natural fit for our permutation-based memory.
We give two simulations:
\begin{itemize}
  \item We show cyclic circuits can simulate PSAM.
  \item We show PSAM can simulate PRAM.
\end{itemize}

The PSAM model is similar to bounded PRAM (\Cref{sec:word-pram}), with the notable exceptions that
(1) each memory address can be read \emph{only once}, and
(2) written memory addresses are chosen by the machine, not the program.
These restrictions are quite strong, but it remains possible to construct
programs that manipulate and traverse tree-like data structures.

\paragraph{Syntax and semantics.}

PSAM is identical to PRAM (\Cref{sec:word-pram}), except that we change
the instruction set.
PSAM instructions are specified by the following grammar:
\begin{align*}
  \var{instr}
  ::=~&\vb{x} \gets f(\vb{x})           & \text{update local state} \\
    |~&y \gets \keyword{read}~x         & \text{read value stored at address $x$} \\
    |~&y \gets \keyword{write}~x        & \text{write $x$ in a fresh address; save address in $y$} \\
    |~&x \gets \keyword{input}          & \text{read one word from the input tape} \\
    |~&\keyword{output}~x               & \text{write one word to the output tape}\\
    |~&y \gets \keyword{fork}~f(\vb{x}) & \text{activate a processor with local state $f(\vb{x})$}\\
    |~&\keyword{ret}~x                  & \text{halt current process with output $x$}
\end{align*}

The key differences between PSAM and PRAM are in the specification of
$\var{read}$/$\var{write}$ and of $\var{fork}$/$\var{ret}$.
\begin{itemize}
  \item On $\var{write}$, the processor does not choose the storage address.
    Instead, the \emph{machine} arbitrarily selects some fresh address, saves the written
    value at that address, and tells the processor which address was chosen.
    (Our PSAM-simulating circuit simply increments addresses starting from zero.)

  \item On $\var{read}$, the machine returns the content at the
    target address, then it \emph{invalidates} that address.
    If a processor reads an invalid address, then the machine `crashes'
    and immediately outputs $\bot$.
    Useful PSAM programs never read the same address more than once.

  \item After $\var{fork}$, the spawned child may return one word to its parent by calling $\var{ret}$.
    $\var{fork}$ sets aside one word of parent local memory $y$ as a \emph{pointer}
    to the child's returned word.
    If the parent \emph{dereferences} this pointer before the
    child returns, the machine `crashes' and immediately outputs $\bot$. 
    Useful PSAM programs do not dereference return pointers
    until after they are set by $\var{ret}$.
    Operating on a pointer, e.g. by using an update instruction to modify its
    content, dereferences the pointer.
    Saving a pointer in memory \emph{does not} dereference the pointer.
\end{itemize}

The $\var{fork}/\var{ret}$ pointer is needed because -- unlike in PRAM -- PSAM processors cannot communicate through shared memory alone.
Indeed, as addresses can only be written once, and as addresses are chosen arbitrarily,
a child cannot write a value that its parent can read, except by using the $\var{ret}$ return value.

In PSAM, we use a word-size $w = \Theta(\log n)$ where the hidden constant is
large enough to store two memory addresses, plus extra metadata, sufficient to
implement binary tree data structures.

\paragraph{PSAM capabilities; specifying PSAM programs.}
Rather than tediously writing out PSAM programs in the above instruction
format, we specify PSAM programs as simple recursive function definitions.
Our PSAM programs manipulate \emph{binary trees} that store data words at their leaves.
We use the following inductive definition:

\algorithmlineheight
\begin{empheq}[box=\fbox]{align}\label{defn:tree}
  \var{Tree}
  ::= \var{Empty}
  ~|~\var{Leaf}(\var{word})
  ~|~ \var{Branch}(\var{word},\var{Tree}, \var{word}, \var{Tree})
\end{empheq}
Each branch node stores two pointers to its subtrees, as well as two natural numbers denoting the
\emph{depth} of each subtree.

In the following, we argue that we can compile our recursive function definitions to PSAM programs. To aid understanding, we specify an example PSAM program that
in parallel sums 
words stored at the leaves of a binary tree:
\begin{align*}
  \gtext{1}~~&\var{sum}(t) \defn\\
  \gtext{2}~~&~~\keyword{match}~t~\keyword{with}\\
  \gtext{3}~~&~~~~\var{Empty} \mapsto 0\\
  \gtext{4}~~&~~~~\var{Leaf}(x) \mapsto x\\
  \gtext{5}~~&~~~~\var{Branch}(d^\ell, t^\ell, d^r, t^r) \mapsto\\
  \gtext{6}~~&~~~~~~\keyword{let}~(t_{\var{shallow}}, t_{\var{deep}}) \defn \keyword{if}~d^\ell \leq d^r~\keyword{then}~(t^\ell, t^r)~\keyword{else}~(t^r, t^\ell)\\
  \gtext{7}~~&~~~~~~~~~~~s_0 \defn \keyword{PAR}(\var{sum}(t_{\var{shallow}}))\\
  \gtext{8}~~&~~~~~~~~~~~s_1 \defn \var{sum}(t_{\var{deep}})\\
  \gtext{9}~~&~~~~~~\keyword{in}~s_0 + s_1
\end{align*}
In words, the sum of a tree's leaves is (1) zero if the tree is empty, (2) $x$ if the tree is a leaf storing word $x$, or (3) computed by summing subtrees in parallel, then adding the results.
Our specification delegates the sum of the shallower subtree to a child process; the reason for this is explained shortly. 

Such specifications map to PSAM programs in the following way:
\begin{itemize}
  \item \textbf{Tree structures.}
    Our specifications manipulate binary trees.
    The PSAM maintains trees by storing
    nodes in shared memory.
    Each node has tag bits that indicate its `type': $\var{Empty}$,
    $\var{Leaf}$, or $\var{Branch}$.
    Each node also stores relevant data; in the case of a
    $\var{Branch}$ node, subtree roots are stored as memory addresses.
    Our PSAM implements $\keyword{match}$ expressions on trees via $\var{read}$ instructions;
    in our example, the processor's first step is to read the root node $t$.
  \item \textbf{Case analysis.}
    Our specifications include \emph{conditional} behavior, e.g. depending on the type
    of tree node.
    A PSAM processor handles conditional behavior by updating its program
    counter depending on bits in the local state.
    In our example, the processor conditionally jumps to one of
    three program locations, depending on the type tag of $t$.
  \item \textbf{Basic expressions.}
    Our specifications include basic expressions, e.g. adding two values, comparing two values, etc. 
    A PSAM processor handles such expressions by using its ability to update local state with arbitrary circuit-expressible functions $f$.
  \item \textbf{Function invocation.}
    Our specifications include function calls.
    A PSAM processor handles function calls by maintaining a standard \emph{call stack}.
    I.e., before jumping to the code of the called function, the processor pushes its local
    state to a stack data structure, which is stored in shared memory.
    At all times, the processor maintains a pointer to the top of the stack.
    Because local state is only a constant number of words, we can save local state with only a constant number of $\var{write}$ instructions.
    To return from a called function, the processor uses $\var{read}$ to
    pop its old state and continue where it left off.
    In our example, the processor saves its local state before the call to
    $\var{sum}(t_{\var{deep}})$; after the recursive call concludes, the
    processor reads its old state before performing the final addition.
  \item \textbf{Parallel execution.}
    Our specifications include \emph{parallel} behavior, e.g. traversing two
    branches of a tree in parallel.
    We denote this by writing $\keyword{PAR}(e)$ where $e$ is an arbitrary
    expression to be computed by a child process.
    A PSAM processor handles parallel expressions by invoking $\var{fork}$.
    The value of $e$ is represented by the pointer returned by $\var{fork}$.
    \textbf{Important note:} it is \emph{crucial} that on $\keyword{PAR}(e)$,
    the child must return before the parent uses the value of the parallel
    expression; else the PSAM will dereference an invalid return pointer and crash.
    In our example, we delegate the \emph{shallower} tree to the
    child process, ensuring the child computes
    $\var{sum}(t_{\var{shallow}})$ using fewer instructions than the call to
    $\var{sum}(t_{\var{deep}})$.
    Because PSAM processors operate in lockstep, the child computes its sub-sum before the parent, so
    the subsequent addition $s_0 + s_1$ does not dereference an invalid pointer.
  \item \textbf{Destructive behavior.}
    When our specifications case analyze trees, those trees are destroyed.
    This is reflected in our PSAM by the fact that reading a memory element invalidates the read address.
    It is thus important to check that specifications do not use the same tree twice.
    Notice that in our example, $t$, $t^\ell$, $t^r$, $t_{\var{shallow}}$, and
    $t_{\var{deep}}$ are each used at most \emph{once}, regardless of the program's
    execution path.
    If we wished to adjust $\var{sum}$ such that $t$ is not `erased', we could
    rebuild $t$ as the recursion unwinds.
\end{itemize}

In \Cref{sec:circuit-constructions}, we simulate PSAM with Boolean gates alone.
The key ideas behind this simulation are depicted in \Cref{fig:pram}:
we implement main memory with a permutation network, and we implement each PSAM instruction with a compute unit.
These units are connected to each other through a filter, allowing for calls to $\var{fork}$/$\var{ret}$.

\section{PSAM from Cyclic Circuits}\label{sec:circuit-constructions}

In this section, we simulate PSAM (\Cref{sec:psam}) with a cyclic circuit (\Cref{sec:ckt}).
The goal of this section is to establish the following:

\begin{lemma}[PSAM from Cyclic Circuits]\label{thm:psam-ckt}
  Let $\mathcal{P}$  be a PSAM program (\Cref{sec:psam}) that on length-$n$
  inputs halts within $W(n)$ work, where $W(n) = O(\mathrm{poly}(n))$ and the
  quantity $W(n)$ is polylog-computable (\Cref{notation:uniform}).
  There exists a cyclic circuit $C_n$ of size $O(W(n) \cdot \log^3 n)$
  that simulates $\mathcal{P}$ on all length-$n$ inputs.
  Suppose that on length-$n$ input $\vb{x}$, $\mathcal{P}(\vb{x})$ halts in time $T$.
  Then $C_n(\vb{x})$ computes its output within delay $T \cdot O(\log^2 n)$.
  The family $\{C_n:n \in \mathbb{N}\}$ is polylog-uniform.
\end{lemma}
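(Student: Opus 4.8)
The plan is to construct the circuit $C_n$ as the concrete realization of the high-level picture in \Cref{fig:pram}. I would build $C_n$ from three kinds of ingredients: (1) a collection of $M = O(W(n))$ \emph{compute units} $\mu_0, \dots, \mu_{M-1}$, each a small acyclic subcircuit of size $O(\log^2 n)$ that handles one PSAM instruction; (2) a \emph{dynamically-routed permutation network} over $M$ words of width $w = \Theta(\log n)$ that serves as the single-use shared memory backing $\var{read}/\var{write}$; and (3) a \emph{coordination filter} and \emph{tape filters} through which units activate children and access the input/output tapes. The crux is that all of these must be assembled into a single \emph{legal} cyclic circuit (\Cref{defn:legal}), so that every wire takes a unique value, and the dependency order of wires must witness that the PSAM's step-by-step execution can be ``unrolled in space'': unit $\mu_i$ fires only after it receives state from its parent through the coordination filter, and it reads a memory word only after the corresponding $\var{write}$ has been routed to the permutation network.

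First I would make precise the permutation-network and filter components, citing the promised constructions in this section: a network permuting $m$ width-$w$ words using $O(m \log^2 m)$ swaps, each swap costing $\Theta(w) = \Theta(\log n)$ gates, so $O(m \log^2 m \cdot \log n)$ gates total; and a filter of matching quasilinear size and $O(\log m)$ delay. With $m = O(W(n))$ inputs I would record the size bound $O(W(n) \cdot \log n \cdot \log^2 n) = O(W(n) \cdot \log^3 n)$, which is exactly the claimed circuit size, and verify that the $O(\log^2 n)$-gate, $O(\log^2 n)$-delay bounds on the update functions $f$ and on $\star$ do not dominate (each appears in only $O(1)$ places per unit). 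Second I would establish \emph{legality} and the \emph{delay} bound together, since both hinge on the \emph{dynamic} routing property promised for the network and filters: each routing decision for the $i$-th input depends only on a prefix of inputs, so the wire-dependency relation induced by $C_n$ is acyclic \emph{as a relation on determined values} even though the wiring graph has cycles. Concretely, I would define an ordering on wires following the logical timeline of the PSAM execution (which step, which processor by the age-based priority of \Cref{sec:word-pram}), and argue by induction along this order that every wire's value is forced, giving a unique valid assignment and hence legality. The same induction bounds delay: one PSAM step corresponds to firing a ``wavefront'' of units, each incurring $O(1)$ traversals of the network/filter at $O(\log m) = O(\log n)$ delay plus an $O(\log^2 n)$ network routing delay and $O(\log^2 n)$ for an update $f$, yielding $O(\log^2 n)$ delay per PSAM step and $T \cdot O(\log^2 n)$ overall.

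I expect the main obstacle to be the legality argument, i.e.\ showing that connecting $M$ units through shared, cyclically-wired network and filter circuitry never creates a combinational loop with an ill-defined or ambiguous assignment. The danger is a ``deadlock'' in which two units each wait on a value the other must produce; this is precisely what the dynamic (prefix-only) routing property and the eager AND semantics (\Cref{defn:assign}) are designed to rule out. The heart of the proof is therefore to define a strict well-founded order on the relevant wires --- keyed first by simulated PSAM time, then by processor priority, then by position within a unit or within one pass through the network --- and to verify that every gate's inputs precede its output in this order whenever that gate's output is not forced early by an eager zero. Establishing this order rigorously (and checking that the filter's half-in/half-out requirement, handled by auxiliary ``clean-up'' units, does not violate it) is the delicate part; once it is in place, uniqueness of the assignment, finiteness of delay, and the per-step delay accounting all follow by the same induction. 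Finally I would confirm polylog-uniformity by noting that the units, network, and filters are built from a fixed template indexed by gate position, so a random-access Turing machine can emit any gate's description in $O(\mathrm{poly}(\log n))$ time given the polylog-computability of $W(n)$.
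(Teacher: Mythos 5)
Your construction is essentially the paper's: the same decomposition into $O(W(n))$ per-instruction compute units glued by a dynamic permutation network ($O(m\log^2 m)$ swaps of $\Theta(\log n)$-bit words) and bidirectional filters, the same size and delay accounting, and the same appeal to prefix-only routing to rule out deadlock. However, there is one genuine gap in your legality argument. Your well-founded-order induction shows that every wire \emph{touched by the simulated execution} is forced to a unique value, but legality (\Cref{defn:legal}) requires a unique valid assignment on \emph{every} wire, including the swap-control wires of network portions the execution never exercises. Your auxiliary clean-up units handle the filters' half-in/half-out requirement and can burn resources they themselves create (write a word, read it back, consume tape slots), but they cannot reach a memory address that the \emph{program} wrote and never read: the address of such an ``orphan'' write is known only to the processor that performed it, so no burn unit can issue the matching read, the corresponding routing decisions in the $\var{bipermute}$ network remain unconstrained, and the circuit admits multiple valid assignments --- it is illegal, not merely slow. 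Your induction cannot repair this, because the problematic wires are precisely those with no forcing dependency chain from the input.

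The paper closes this hole by restricting attention to \emph{clean} PSAM programs (\Cref{defn:clean}): programs in which every $\var{write}$ is eventually followed by a $\var{read}$ of the same address. Under that hypothesis the burn strategy fully programs every partition network and legality goes through; the restriction is then discharged downstream by arranging that the specific PRAM-simulating PSAM program of \Cref{sec:psam-pram} cleans up after itself by traversing its memory tree at the end. As stated, the lemma quantifies over arbitrary PSAM programs, so your proof (and, strictly speaking, any proof) needs either this cleanliness hypothesis or an explicit preprocessing step forcing it; flagging and formalizing that condition is the one missing idea in your proposal. The rest --- size bound $O(W(n)\cdot\log^3 n)$ dominated by the network, per-step delay $O(\log^2 n)$, and polylog-uniformity from the templated structure --- matches the paper's argument.
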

The proof of \Cref{thm:psam-ckt} is by construction of the circuit family $C_n$; the construction is described in the remainder of this section.
By combining this result with results from \Cref{sec:psam-pram}, we obtain our simulation of PRAM (\Cref{thm:main}).

\subsection{Swap}
Our construction uses a permutation network to route memory elements between subcircuits.
The primitive underlying this network is a \emph{swap}:
\[
  \var{swap}(s, x, y) \defn \left(
    (\neg s \cdot x) \lor (s \cdot y),
    (\neg s \cdot y) \lor (s \cdot x)\right)
\]
Here, $\lor$ denotes logical OR and $\neg$ denotes logical NOT; each of these can be implemented from AND/XOR/1.
$\var{swap}$ outputs $(x, y)$ when $s = 0$, and it outputs $(y, x)$ when $s = 1$.
For convenience, we generalize $\var{swap}$ to the following definition, which swaps two length-$w$ vectors:
\algorithmlineheight
\begin{empheq}[box=\fbox]{align*}
  \var{swap}_w(s, \vb{x}, \vb{y}) \defn \left(
    (\neg s \cdot\vb{x}) \lor (s \cdot \vb{y}),
    (\neg s \cdot\vb{y}) \lor (s \cdot \vb{x})\right)
\end{empheq}
This definition treats $\lor$ as the element-wise OR of two vectors and $\cdot$ as the AND scaling of each vector element by a single scalar.
$\var{swap}_w$ uses $O(w)$ total gates.

\paragraph{Eagerness of $\var{swap}$.}

While $\var{swap}$ may seem simple, its small definition belies a subtle detail.
To see this, we propose another strawman definition:
\[
  \var{bad}(s, x, y) \defn \left(
    s \cdot(x \oplus y) \oplus x, s \cdot(x \oplus y) \oplus y\right)
\]
This gate seems to compute the same function as $\var{swap}$.
One might even attempt to argue that $\var{bad}$ is superior to $\var{swap}$, as it can be
computed with fewer gates.
However, in the context of a cyclic circuit, $\var{swap}$ and $\var{bad}$ are \emph{not equivalent}.
To see this, suppose that input $y$ is not yet computed at the time we consider the swap, which we denote by setting $y$ to $\bot$:
\begin{center}
\begin{tabular}{ccc|cc}
  $s$ & $x$ & $y$ & $\var{swap}(s, x, y)$ & $\var{bad}(s, x, y)$\\
  \hline
  $0$ & $x$ & $\bot$ & $(x, \bot)$ & $(x, \bot)$ \\
  \rowcolor{red!20}$1$ & $x$ & $\bot$ & $(\bot, x)$ & $(\bot, \bot)$  \\
\end{tabular}
\end{center}
The table shows that $\var{swap}$ can \emph{eagerly} forward $x$ to its
output wires, even before the value of $y$ is known.
(Indeed, $\var{swap}$ also eagerly forwards $y$ if $x = \bot$.)
$\var{bad}$ cannot eagerly forward $x$ because its second output \emph{always} depends on $y$.
There are two important points to this example.

First, the rules of Boolean algebra do not necessarily apply in a cyclic circuit.
$\var{bad}$ is not equivalent to $\var{swap}$ because, in a cyclic circuit, AND does not distribute over XOR.
Replacing $x\cdot(y \oplus z)$ by $x\cdot y \oplus x \cdot z$ can \emph{change circuit dependencies}.
On the other hand, many Boolean algebra rules \emph{do} apply.
For instance, AND and XOR each remain commutative and associative.

Second, the eager nature of $\var{swap}$ is \emph{central} to our construction.
To see why, suppose that bit $x$ is computed by some step of the PSAM, and
suppose $y$ is computed by some \emph{later} step.
Indeed, suppose that $y$ \emph{depends on $x$}.
Even in this situation, $\var{swap}$ works, because it can deliver $x$ to the destination where $y$ is computed, then $y$ can be routed as input to $\var{swap}$ via a cycle.
It is precisely this eager feature of $\var{swap}$ that allows us to build a
single permutation network that wires together PSAM steps.

\subsection{Helper Circuits}\label{sec:helper}

We construct some subcircuits used in our PSAM simulation.

First, $\var{halves}$ takes a vector of wires $\vb{x}$ and splits it into two vectors of half the length.
$\var{halves}$ has no gates:
\algorithmlineheight
\begin{empheq}[box=\fbox]{align*}
  &\var{halves}(\vb{x}) \defn \keyword{let}~n \defn |\vb{x}|~\keyword{in}~(\vb{x}[0..n/2-1], \vb{x}[n/2..n-1])
\end{empheq}

Second, we formalize a classic ripple-carry adder.
\algorithmlineheight
\begin{empheq}[box=\fbox]{align*}
  \gtext{1}~~&\vb{x} + \vb{y} \defn\\
  \gtext{2}~~&~~\keyword{if}~|\vb{x}| = 0~\keyword{then}~[0]~\keyword{else}\\
  \gtext{3}~~&~~~~\keyword{let}~[\var{carry}] \concat \vb{lsbs} \defn \vb{x}[1..] + \vb{y}[1..]\\
  \gtext{4}~~&~~~~~~~~~\var{msb} \defn \vb{x}[0] \oplus \vb{y}[0] \oplus \var{carry}\\
  \gtext{5}~~&~~~~~~~~~\var{carry}' \defn (\vb{x}[0] \oplus \var{carry}) \cdot (\vb{y}[0] \oplus \var{carry}) \oplus \var{carry}\\
  \gtext{6}~~&~~~~\keyword{in}~[\var{carry}'] \concat [\var{msb}]\concat \vb{lsbs}
\end{empheq}
We fully specify this circuit
(1) for completeness
(2) to familiarize the reader with our circuit notation, and
(3) to emphasize a property of the adder's delay.

The adder is defined recursively.
To add two zero-bit values, we output the one-bit vector $[0]$.
To add $n$-bit values, we
(1) add together the $n-1$ low bits,
(2) strip the `carry' from the recursively computed sum,
(3) compute the new msb and carry, and
(4) concatenate the full $n+1$-bit sum.

Note that the adder outputs its lsb within $O(1)$ delay.
This is because the lsb of the sum depends only on the lsbs of the input.
Correspondingly, the $i$-th lsb of the adder output is computed within delay $O(i)$.
This property will be useful later when we argue that our future $\var{partition}$ circuit has low delay; this fact will rely on \emph{pipelining} adders, which is possible due to the above property.

\subsection{Permutations}

This section formalizes the main circuit components of our construction.
Our goal is to construct a \emph{dynamic permutation network}.
This network takes as input $n$ words, each tagged by some distinct target destination.
The network automatically routes each word to its destination within polylog delay.

The crucial \emph{dynamic} property of the network is that even if only some
\emph{prefix} of input words is available, those words are
\emph{eagerly routed} to their destination within polylog delay.
This eagerness is central to our handling of PSAM/PRAM, since it allows \emph{sequential} composition of instructions.
At the same time, the network's low delay enables efficient \emph{parallel} composition of instructions.

Our network is essentially a binary radix sort implemented in hardware, and it
is similar in structure to prior permutation/sorting networks~\cite{Bat68,CheChe96}.
Our emphasis is to show that the network indeed achieves dynamic routing.

\paragraph{Partition at position $i$.}

\begin{figure}
  \centering
  \includegraphics[width=0.65\textwidth]{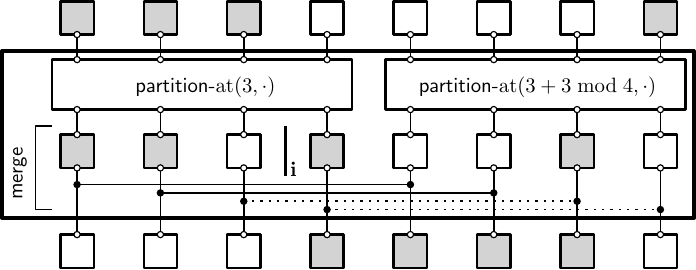}
  \caption{
    An example of $\var{partition-at}$.
    Inputs are depicted at the top; outputs are at the bottom.
    Consider eight elements, four of which have
    msb zero (shaded); the others have msb one
    (unshaded). 
    $\var{partition-at}$ rearranges elements such that shaded elements
    occur consecutively starting at position $i$ (here, $i= 3$).
    We first recursively partition each half of the array,
    and then
    $\var{merge}$ conditionally swaps elements, depending on their shading and on which side of $i$ they are on.
  }\label{fig:merge-example}
\end{figure}

The main component of our permutation network is a subcircuit that we call $\var{partition-at}$. $\var{partition-at}$ takes as input
(1) a vector of $n \cdot w$ wires $\vb{x}$, where $n$ is a power of two, and
(2) a vector of $\log n$ wires $\vb{i}$.
The circuit interprets $\vb{x}$ as an array of $n$ length-$w$ words, and it interprets $\vb{i}$ as an index of that array.
$\var{partition-at}$ uses $\var{swap}$ gates to \emph{rearrange} the content of $\vb{x}$ such that those words in $\vb{x}$ with msb $0$ occur to \emph{right} of position $i$, and those with msb $1$ occur to the \emph{left} of $i$, wrapping around as necessary.

\Cref{fig:merge-example} sketches an example; this sketch is useful throughout the following our explanation.
The $\var{partition-at}$ circuit is formalized below:
\algorithmlineheight
\begin{empheq}[box=\fbox]{align*}
  \gtext{1}~~&\var{partition-at}_{n,w}(\vb{i},\vb{x}) \defn\\
  \gtext{2}~~&~~\keyword{if}~n = 1~\keyword{then}~([1 \oplus\vb{x}[0]],\vb{x})\\
  \gtext{3}~~&~~\keyword{else}\\
  \gtext{4}~~&~~~~\keyword{let}~(\vb{x}^\ell,\vb{x}^r) \defn \var{halves}(\vb{x})\\
  \gtext{5}~~&~~~~~~~~~(\vb{zeros}^\ell, \vb{y}^\ell) \defn \var{partition-at}_{n/2,w}(\vb{i}[1..], \vb{x}^\ell)\\
  \gtext{6}~~&~~~~~~~~~(\vb{zeros}^r, \vb{y}^r) \defn \var{partition-at}_{n/2,w}((\vb{i} + \vb{zeros}^\ell)[2..], \vb{x}^r)\\
  \gtext{7}~~&~~~~~~~~~(\vb{z}^\ell, \vb{z}^r) \defn \var{merge}_{n,w}(\vb{i}[0], 0, 0, \vb{i}[1..], \vb{y^\ell}, \vb{y}^r)\\
  \gtext{8}~~&~~~~\keyword{in}~(\vb{zeros}^\ell + \vb{zeros}^r, \vb{z}^\ell \concat \vb{z}^r)
\end{empheq}

$\var{partition-at}$ recursively breaks down the array $\vb{x}$; as the recursion unwinds, it uses a linearly-sized, log-delay subcircuit called $\var{merge}$ to combine the two halves; $\var{merge}$ is explained later.

The key detail of $\var{partition-at}$ is its management of index $i$.
Recall, we wish to place all msb $0$ elements to the right of $i$.
Let $\vb{x}^\ell, \vb{x}^r$ denote the halves of $\vb{x}$.
$\var{partition-at}$ uses its first recursive call to place msb $0$ elements of $\vb{x}^\ell$ to the right of $i$ (actually, we place elements to the right of $i \bmod n/2$). 
The idea here is to \emph{align} $\vb{x}^\ell$ elements with their position in the output vector.

Next, we wish to place each msb $0$ element from $\vb{x}^r$ to the right of all msb $0$ elements from $\vb{x}^\ell$.
To achieve this, we shift $i$ to the right before making our second recursive call.
This explains the use of adders.
Each call to $\var{partition-at}$ actually achieves two tasks:
(1) it partitions the elements as already described and
(2) it counts the number of msb zero elements in $\vb{x}$.
The first recursive call thus tells us how many elements are already to the right of $i$, allowing us to position $\vb{x}^r$ elements further to the right.

\paragraph{Merging recursively partitioned arrays.}
$\var{partition-at}$ wishes to place zero elements to the right of position $i$, but $i$ is in the range $[0, n)$.
On its recursive calls, $\var{partition-at}$ adjusts $i$ such that each recursive $i$ is in the \emph{smaller} range $[0, n/2)$.
Suppose we were to take the two recursive output arrays $\vb{y}^\ell, \vb{y}^r$ and simply \emph{concatenate} them.
Because the recursive calls operate on arrays of length $n/2$, not length $n$, each element in the concatenated array could be in one of two possible array positions:
(1) the correct position or
(2) a position exactly distance $n/2$ from the correct position.

These two possibilities must be resolved, motivating the design of $\var{merge}$:
\algorithmlineheight
\begin{empheq}[box=\fbox]{align*}
  \gtext{1} ~~&\var{merge}_{n,w}(\var{parity}, \var{left}, \var{right}, \vb{i}, \vb{x}, \vb{y}) \defn\\
  \gtext{2} ~~&~~\keyword{if}~n=1~\keyword{then}\\
  \gtext{3} ~~&~~~~\keyword{let}~\var{s} \defn \var{parity} \oplus \var{left} \oplus \vb{x}[0]\\
  \gtext{4} ~~&~~~~\keyword{in}~\var{swap}_w(s, \vb{x}, \vb{y})\\
  \gtext{5} ~~&~~\keyword{else}\\
  \gtext{6} ~~&~~~~\keyword{let}~(\vb{x}^\ell, \vb{x}^r) \defn \var{halves}(\vb{x})\\
  \gtext{7} ~~&~~~~~~~~~(\vb{y}^\ell, \vb{y}^r) \defn \var{halves}(\vb{y})\\
  \gtext{8} ~~&~~~~~~~~~(\vb{z}^{\ell}, \vb{w}^{\ell}) \defn \var{merge}_{n,w}(\var{parity}, \vb{i}[1..], \var{left} \lor (\vb{i}[0] \cdot \neg \var{right}), \var{right}, \vb{x}^\ell, \vb{y}^\ell)\\
  \gtext{9} ~~&~~~~~~~~~(\vb{z}^r, \vb{w}^r) \defn \var{merge}_{n,w}(\var{parity}, \var{left}, \var{right} \lor (\neg \vb{i}[0] \cdot \neg \var{left})
  \vb{i}[1..], \vb{x}^r, \vb{y}^r)\\
  \gtext{10}~~&\mathrlap{~~~~\keyword{in}~(\vb{z}^\ell \concat \vb{z}^r, \vb{w}^\ell \concat \vb{w}^r)}
\end{empheq}
$\var{merge}$ combines two partitioned vectors $\vb{x}, \vb{y}$ from calls to $\var{partition-at}$.
The high-level operation of $\var{merge}$ is to element-wise swap entries of $\vb{x}$ and $\vb{y}$ yielding a single partitioned array.
$\var{merge}$ is conceptually a \emph{half-cleaner} from the sorting network literature.
The challenge of this operation is in deciding \emph{which} elements should be swapped and which should not.

$\var{merge}$ recursively breaks down $\vb{x}$ and $\vb{y}$;
once $\vb{x},\vb{y}$ each contain exactly one word, it conditionally swaps those two words.
The key detail of $\vb{merge}$ is its management of the position $i$ as well as variables $\var{left}$ and $\var{right}$.
The value of $\var{left}$ (resp. $\var{right}$) denotes an answer to the following question:
does the currently considered subvector of $\vb{x}$ lie entirely to the left (resp. right) of $\var{partition-at}$'s value of $i$?
These bits are useful because once we reach the base case, we know we wish to move the single element in $\vb{x}$ to the right depending on whether we are considering a location that is left of $i$.
The value $\var{parity}$ flips this logic if the original value of $i$ is at least $n/2$.

As the recursion unwinds, $\var{merge}$ concatenates results of its recursive calls,
in the end yielding two correctly partitioned halves, which
$\var{partition-at}$ concatenates together. 

\paragraph{Stability.}

The $\var{partition-at}$ circuit is \emph{stable} in the following sense.
Output words with msb $0$ appear in their original relative order;
output words with msb $1$ appear in the \emph{reverse} of their original relative order.
Namely, the output sequence is \emph{bitonic}~\cite{Bat68}.

The stability of $\var{partition-at}$ can be seen by an inductive argument.
In the base case, partitioning a singleton vector is trivially stable.
In the general case, we have two stable bitonic sequences $\vb{y}^\ell$ and $\vb{y}^r$, where each $0$-tagged $\vb{y}^r$ elements appear to the \emph{right} of all $0$-tagged $\vb{y}^\ell$ elements (modulo $n/2$), and similarly for all $1$-tagged elements;
by merging, we thus ensure that all $0$-tagged elements appear in their original relative order, and similarly for all $1$-tagged elements.
Thus, $\var{partition-at}$ indeed achieves this notion of bitonic stability.

\paragraph{Dynamic behavior.}
Crucially, $\var{merge}$'s base case decision of whether to swap two elements
is only made with respect to (1) the index $i$ and (2) the msb of $\vb{x}$.
I.e., the value of $\vb{y}$ is \emph{irrelevant}.
Moreover, if we revisit $\var{partition-at}$, we notice that arguments to the
first recursive call \emph{do not depend} on the second recursive call.
These two parts together mean that $\var{partition-at}$ starts sending elements to correct
positions even when only an arbitrary prefix of $\vb{x}$ is available.

These properties are crucial to the dynamic behavior of our permutation network,
because they mean that decisions about how to route a word depend
only on the values of words originally to that element's \emph{left}.
Rephrasing this in the context of PSAM/PRAM, we can correctly route a memory element based only on
that element's target address and \emph{prior} routing decisions.

\paragraph{The partition circuit and its complexity.}

\begin{figure}
  \centering
  \includegraphics[width=0.4\textwidth]{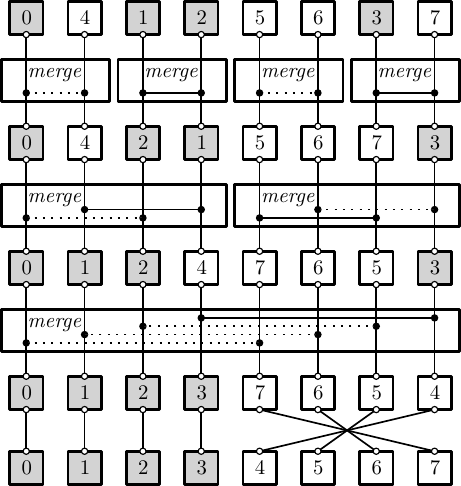}
  \qquad
  \includegraphics[width=0.4\textwidth]{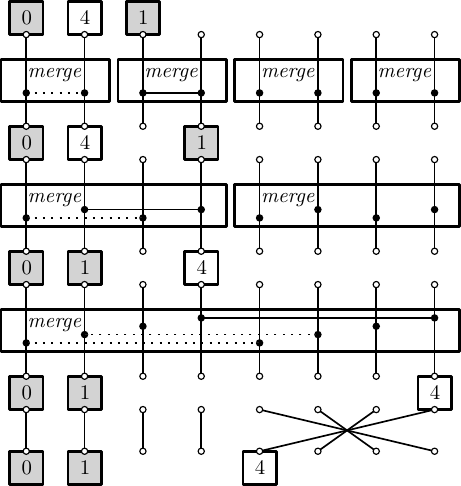}
  \caption{
    (Left) An example partition.
    Inputs are depicted at the top; outputs are at the bottom.
    The partition is \emph{stable}: it preserves the
    relative order of elements.
    Horizontal bars depict calls to $\var{swap}$.
    (Right) The circuit begins routing even when the input is only partially available.
    This dynamic routing is possible because the destination of each 
    element depends only on the destination of elements initially to the left.
  }\label{fig:partition-example}
\end{figure}

We provide a simple wrapper around $\var{partition-at}$ which
(1) partitions at position zero,
(2) reverses the second half of the vector,
and (3) drops the msb of each word in the output.
Dropping the msb is convenient for composing partitions into a full permutation.
\algorithmlineheight
\begin{empheq}[box=\fbox]{align*}
  \gtext{1}~~&~\var{partition}_{n,w}(\vb{x}) \defn\\
  \gtext{2}~~&~~\keyword{let}~(\text{\textvisiblespace}~, \vb{z}) \defn \var{partition-at}_{n,w}(0^{\log n}, \vb{x})\\
  \gtext{3}~~&~~~~~~~(\vb{z}^\ell, \vb{z}^r) \defn \var{halves}(\vb{z})\\
  \gtext{4}~~&~~\keyword{in}~(\var{drop-msbs}_w(\vb{z}^\ell), \var{drop-msbs}_w(\var{reverse}(\vb{z}^r)))
\end{empheq}
Above, $\var{drop-msbs}$ denotes a procedure that drops the msb of each length-$w$ word in the argument vector.
\Cref{fig:partition-example} depicts an end-to-end call to $\var{partition}$.

$\var{partition}$ has the following complexity:
\begin{itemize}
  \item $\var{partition}$ has size at most $w\cdot O(n \cdot \log n)$.
  \item $\var{partition}$ has delay $O(\log n)$.
\end{itemize}
  The size of $\var{partition}$ can be derived from
  (1) solutions to basic recurrence equations,
  (2) the linear size of ripple-carry adders, and
  (3) the fact that $\var{swap}_w$ has $\Theta(w)$ gates.

  Maximum delay is more nuanced.
  Indeed, the $\var{partition-at}$ circuit involves adders on $O(\log n)$-bit strings, and adders have linear delay.
  Thus, it may \emph{seem} that $\var{partition-at}$ has $O(\log^2 n)$ delay, caused by a sequence of $O(\log n)$ adders on $O(\log n)$ bit numbers.
  However, recall from discussion in \Cref{sec:helper} that an adder produces its low bits with low delay and high bits with higher delay.
  Ultimately, this \emph{pipelines} the adders induced by $\var{partition-at}$'s recursion:
  the adders at the leaves produce their lowest bits within constant delay,
  allowing adders one level up the recursion tree to compute their low bits only a
  constant delay later, and so on.
  In this manner, all adders complete in $O(\log n)$ delay. 

\paragraph{Permutation.}

\begin{figure}
  \centering
  \includegraphics[width=0.4\textwidth]{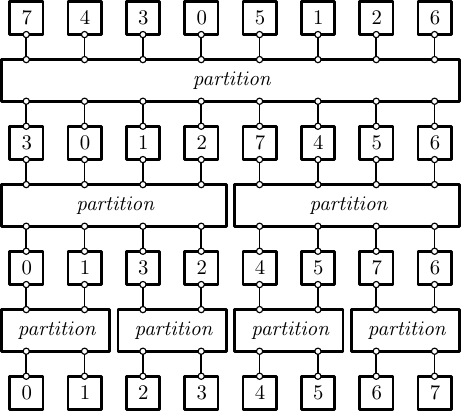}
  \qquad
  \includegraphics[width=0.4\textwidth]{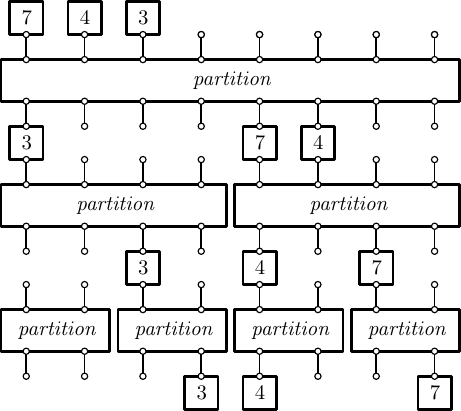}
  \caption{
    (Left) An example permutation.
    Inputs are depicted at the top; outputs are at the bottom.
    The circuit implements a binary radix sorting network; it repeatedly
    partitions input elements based on bits of each
    element's destination.
    (Right) The circuit begins routing even when only a prefix of input wires are set:
    the permutation is \emph{dynamic}.
    This dynamic nature is inherited from the dynamic nature of our partition network; see \Cref{fig:partition-example}.
  }\label{fig:permutation-example}
\end{figure}

We achieve a dynamic permutation network by recursively applying our partition network (see \Cref{fig:permutation-example}):
\algorithmlineheight
\begin{empheq}[box=\fbox]{align*}
  \gtext{1}~~&\var{permute}_{n,w}(\vb{x}) \defn\\
  \gtext{2}~~&~~\keyword{if}~n = 1~\keyword{then}~\vb{x}\\
  \gtext{3}~~&~~\keyword{else}\\
  \gtext{4}~~&~~~~\keyword{let}~(\vb{x}^0, \vb{x}^1) \defn \var{partition}_{n, \log n + w}(\vb{x})\\
  \gtext{5}~~&~~~~\keyword{in}~\var{permute}_{n/2, w}(\vb{x}^0)\concat\var{permute}_{n/2, w}(\vb{x}^1)
\end{empheq}
$\var{permute}$ takes as input $2^k$ words $\vb{x}$ where each entry
of $\vb{x}$ is tagged with a distinct target location.
It uses $\var{partition}$ to split $\vb{x}$ in half such that the resulting left vector $\vb{x}^0$ contains those elements intended for locations with msb $0$, and $\vb{x}^1$ contains those elements intended for locations with msb $1$.
Recall that $\var{partition}$ drops the msb of each vector input, so the msb of each tag is dropped, allowing us to simply recursively apply $\var{permute}$ to each half.
$\var{permute}$ calls $\var{partition}$ with word size $\log n + w$ to account for the $\log n$ tag bits.
Permuting a singleton vector is trivial.
$\var{permute}$ can be understood as an implementation of binary radix sort where all inputs are distinct.

$\var{permute}$'s size/delay can be calculated using the complexity of $\var{partition}$ and by solving simple recurrences:
\begin{itemize}
  \item \textbf{Size.} $\var{permute}$ has size at most $w\cdot O(n \cdot \log^2 n)$.
      When $w = \Theta(\log n)$, $\var{permute}$ has $O(n \cdot \log^3 n)$ gates.
  \item \textbf{Delay.} $\var{permute}$ has $O(\log^2 n)$ delay.
\end{itemize}

\paragraph{Filters.}

\Cref{sec:overview-sketch} describes a \emph{filter} subcircuit, which takes as input $2^k$ words and keeps those $2^{k-1}$ words tagged with $1$.
$\var{filter}$ can be built from a single partition:
\algorithmlineheight
\begin{align*}
  \var{filter}_{n,w}(\vb{x}) \defn
  \keyword{let}~(\text{\textvisiblespace}, \var{keep}) \defn \var{partition}_{n,w+1}(\vb{x})~
  \keyword{in}~\var{keep}
\end{align*}
$\var{filter}$ inherits $\var{partition}$'s size and delay.
$\var{filter}$ calls $\var{partition}$ with word size $w+1$ to account for the single extra tag bit.

\paragraph{Bidirectional permutations and filters.}

Our permutation network takes $n$ tagged input words and sends those
words to $n$ distinct output locations.
So far, our network is \emph{one-directional}, in the sense that it sends data from its source side to its target side only.
To achieve RAM we need a \emph{bidirectional} permutation network that allows data to flow both from source to target and from target back to source.
Each source \emph{request} flows through the network in one direction and
the corresponding \emph{response} flows back in the opposite direction.
This is needed for memory lookups, where a compute unit inputs a memory address to the source side of a network. From here, the network connects this read to the appropriate memory write, and then the written value flows back through the network to the source side where it was requested.

Suppose we have $n$ source addreses and $n$ (untagged) \emph{target} inputs.
Our $\var{bipermute}$ circuit sends each source address to the addressed location, and then pulls the corresponding target element back to the source.
$\var{bipermute}$ is almost identical to $\var{permute}$; just rotate some $\var{swap}$ components such that they point from target to source instead of source to target.
The asymptotic size and delay of this subcircuit thus matches that of $\var{permute}$.
Our formal circuit $\var{bipermute}_{n,w}$ takes as input two vectors:
(1) a length-$n$ vector of $(\log n)$-bit tagged source-side inputs and
(2) a length-$n$ vector of $w$-bit target-side inputs.
It outputs a length-$n$ vector of $w$-bit source-side outputs.
As $\var{bipermute}$ is a simple extension of $\var{permute}$, we do not specify further.

We assume a similar generalization of $\var{filter}$.
Like $\var{filter}$, $\var{bifilter}$ connects half of $n$ sources with $n/2$ targets.
$\var{bifilter}$ simply extends $\var{filter}$ such that these connections are bidirectional.
Formally, $\var{bifilter}_{n,w,\omega}$ takes as input two vectors:
(1) a length-$n$ vector of $(1 + w)$-bit tagged source-side inputs and
(2) a length-$n/2$ vector of $\omega$-bit target-side inputs.
It outputs
(1) a length-$n$ vector of $\omega$-bit source-side outputs (each source tagged with $0$ receives an all zeros output) and
(2) a length-$n/2$ vector of $w$-bit target-side outputs.

\subsection{Simulating PSAM}
\begin{figure}[t]
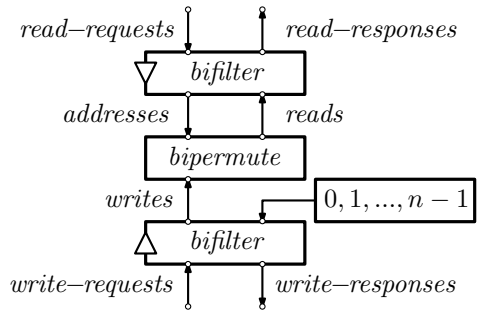


  {\small
\algorithmlineheight
\begin{empheq}[box=\fbox]{align*}
  \gtext{1}~~&\var{memory}_{n,w}(\var{write-reqs}, \var{read-reqs}) \defn\\
  \gtext{2}~~&~~\keyword{let}~(\var{read-resps}, \var{addresses}) \defn \var{bifilter}_{n, \log n-1, w}(\var{read-reqs}, \var{reads})\\
  \gtext{3}~~&~~~~~~~\var{reads} \defn \var{bipermute}_{n/2, w}(\var{addresses}, \var{writes})\\
  \gtext{4}~~&~~~~~~~(\var{write-resps}, \var{writes}) \defn \var{bifilter}_{n, w, \log n-1}(\var{write-reqs}, [0,1,...,n/2 - 1])\\
  \gtext{5}~~&~~~~\keyword{in}~(\var{write-resps}, \var{read-resps})
\end{empheq}

\algorithmlineheight
\begin{empheq}[box=\fbox]{align*}
  \gtext{1}~~&\var{PSAM}_{\mu,n,w}(\var{input-tape}) \defn\\
  \gtext{2}~~&~~\keyword{let}~(\var{input-reqs},\var{output-reqs},\var{write-reqs},\var{read-reqs},\var{to-parent},\var{to-children}) \defn\\
  \gtext{3}~~&~~~~~~~~~~\mu^{n}(\var{input-resps},\var{write-resps},\var{read-resps},\var{from-parent},\var{from-children})\\
  \gtext{4}~~&~~~~~(\var{write-resps}, \var{read-resps}) = \var{memory}_{n/2, w}(\var{write-reqs}, \var{read-reqs})\\
  \gtext{5}~~&~~~~~(\var{input-resps}, \text{\textvisiblespace}) = \var{bifilter}_{n,0,w}(\var{input-reqs}, \var{input-tape})\\
  \gtext{6}~~&~~~~~\var{output-tape} = \var{filter}_{n,w}(\var{output-reqs})\\
  \gtext{7}~~&~~~~~(\var{from-children}, \var{from-parent}) = \var{bifilter}_{n,w,w}(\var{to-children}, \var{to-parent})\\
  \gtext{8}~~&~~\keyword{in}~\var{output-tape}
\end{empheq}
}

\caption{
  Our circuit-based memory unit (top) 
    connects $n$ conditional writes with $n$ conditional reads (exactly $n/2$ of each pass through a respective filter).
    The memory responds to each write with the address where the entry was written. 
  Our PSAM circuit (bottom) instantiates $n$ computes units $\mu$ (see \Cref{fig:unit}) and connects them to each other, to the memory, and to the input/output tapes.
  Memory words and messages sent between compute units are each of size $w = \Theta(\log n)$.
  We denote by $\mu^n$ the parallel composition of $n$ copies of $\mu$.
  Each vector input to $\mu^n$ is
  partitioned into $n$ chunks with each chunk being sent to one
  compute unit $\mu$; similarly outputs of each compute unit are concatenated into the resulting six vectors.
}\label{fig:psam}

\end{figure}

We are now ready to construct our full PSAM-simulating circuit.
Note, \Cref{fig:pram} is a relatively faithful depiction of our full circuit.
The circuit itself, along with a circuit that implements our PSAM memory, is listed in \Cref{fig:psam}.

First, our circuit includes $O(W(n))$ \emph{compute units}, where $W(n)$ is the total work of the simulated PSAM program.
We have not yet described the content of these units, but at its interface a compute unit has six input ports and seven output ports (see \Cref{fig:unit}).
Each of these ports are $O(w)$ bits wide.
To complete our construction, we must properly connect up these ports.

To do so, we first instantiate a \emph{memory unit} which consists of one $\var{bipermute}$ subcircuit and two $\var{bifilter}$ subcircuits.
The $\var{bipermute}$ subcircuit is placed between the two bifilters; we arrange that each $\var{bifilter}$ has an equal number of target ports as the $\var{bipermute}$ subcircuit has source/target ports:
\begin{center}
\includegraphics{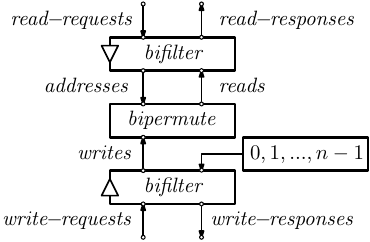}
\end{center}
This memory unit responds to read requests on its source side and to write requests on its target side.
It handles these requests by using the permutation to match reads with writes.
Each write request receives as response the next available address.

In addition to our memory, our PSAM circuit includes an input tape and an output tape, each behind a filter.
Finally, we connect our units to one another through a coordination bifilter, allowing each unit two activate up to two subsequent units, and allowing child units to send a word back to their parent.

In sum, we connect compute units to an infrastructure that allows each unit to leverage some combination of machine capabilities, which we next list.
In this list, `tag' denotes a leading $\{0, 1\}$ value that indicates whether a
payload should pass through a filter or not.
Each compute unit can perform some combination of the following:
\begin{itemize}
  \item Send one tag bit through a bifilter to conditionally read one input tape word.
  \item Send a tagged word through a bifilter to conditionally write the output tape.
  \item Send two tagged words through a bifilter to activate up to two children and receive responses.
  \item Send one word in response to the parent.
  \item Send one tagged word to memory to conditionally write a word.
  \item Send one tagged address to memory to conditionally read a word.
\end{itemize}
The precise handling done by units is described next.

\subsection{Compute Units}

We have now set up the full infrastructure in which we embed our compute units.
The remaining task is to say what these compute units actually \emph{are}.

Recall that our current goal is to simulate PSAM (\Cref{sec:psam}).
We now construct a single compute unit that can execute
\emph{any} instruction in a particular PSAM program.
To do so, it suffices to \emph{separately} handle each
instruction of the target program.
This works because the number of instructions in the program is constant,
so we can implement a custom compute unit for each instruction, then glue together a ``top level'' unit that conditionally behaves as any one of these custom units, depending on the PSAM program counter.
This conditional behavior is achieved by executing \emph{each} instruction of the program, then multiplexing the effect of that instruction by using multiplexers controlled by the program counter.
Designing our compute unit this way incurs constant overhead; if one wished to refine our approach, it would be useful to consider carefully designed compute units.

\begin{figure}
  \centering
  \includegraphics[width=\textwidth]{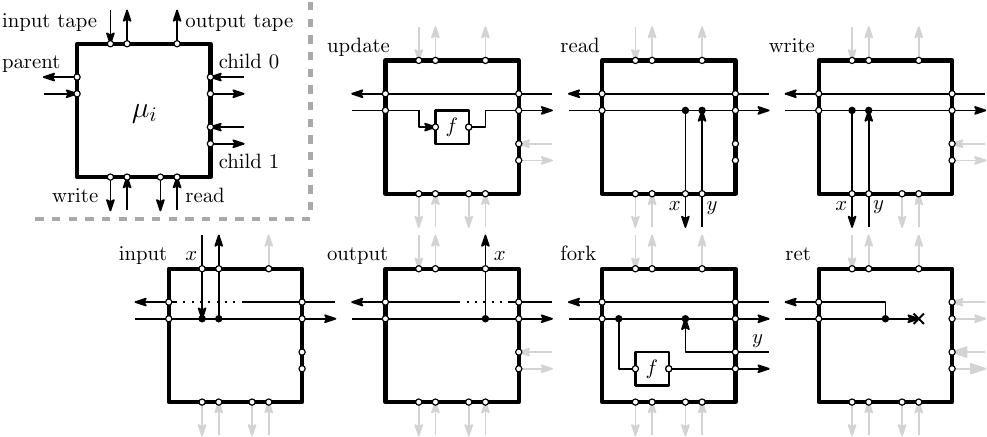}
  \caption{
    The structure of compute units (top left) and the compute unit configuration for each PSAM instruction type.
  }\label{fig:programmed-unit}
\end{figure}

\Cref{fig:programmed-unit} sketches a compute unit for each of the seven PSAM instruction types.
Rather than writing out each of these units as a formal circuit, we find it more direct to simply describe their handling in prose.
In the following, if we do not explicitly mention how the unit handles one of the machine capabilities (receiving an input tape word, writing a word, etc.), then this means that the unit opts out of that capability by sending a $0$ to the appropriate filter.

In each of the following cases except for $\var{fork}$ and $\var{ret}$,
the unit updates its local state, then activates exactly one child by passing the updated state.
Then, the unit receives from its single child a returned word, which it forwards to its own parent (see, e.g., $\var{update}$ in \Cref{fig:programmed-unit}).
Each of the following units appropriately updates the program counter.
Instruction-specific details follow:
\begin{itemize}
  \item $\vb{x} \gets f(\vb{x})$:
    Our first unit describes how to update processor local state.
    This unit receives from its parent a local state $\vb{x}$, then implements $f(\vb{x})$ as a circuit.
  \item $y \gets \var{read}~x$:
    To read a memory word at address $x$, the unit places $x$ on its outgoing read port.
    The memory responds via the incoming read port, which the unit then saves in word $y$ of its local state.
  \item $y \gets \var{write}~x$:
    To write a memory word $x$, the machine places $x$ on its outgoing write port.
    The memory responds with an address, indicating where $x$ was written.
    The unit saves this address in word $y$ of its local state.
  \item $x \gets \var{input}$: To read from the input tape,
    the unit places a one on its outgoing input port, indicating it wishes to read the tape.
    The tape responds on the incoming input port, and the unit saves the value in word $x$ of its local state.
  \item $\var{output}~x$: To write to the output tape,
    the unit stores $x$ on its output port.
  \item $y \gets \var{fork}~f(\vb{x})$:
    To fork a parallel process, the unit first computes $f(\vb{x})$, yielding the forked process starting state.
    Then, the unit sets aside word $y$ of its local state to hold the return pointer of the forked process.
    The unit then activates two children, passing to its first child its local state (with the program counter incremented) and passing to its second child the state $f(\vb{x})$.
  \item $\var{ret}~x$:
    The unit sends local word $x$ to its parent.
    It activates no children.
\end{itemize}

\paragraph{Cleaning up execution.}

At this point, we have shown the core our simulation of PSAM by cyclic circuits.
However, there remains one technical issue:
we have not yet ensured that our cyclic circuit is \emph{legal} (\Cref{defn:legal}).
Each of the individual \emph{components} of our main construction is legal, but we have not yet shown that the composition is legal.

The challenge here is that our circuit may not use up all of its available resources.
For instance, we might not use all provisioned memory reads.
When this happens, there will be portions of partition networks whose routing decisions are unspecified, leading to ambiguity in the circuit's assignment (\Cref{defn:assign}).

The solution to this problem is relatively straightforward: once the PSAM program output is computed, use additional compute units to burn all remaining resources.
These spare units perform the following:
\begin{itemize}
  \item Write all unwritten memory addresses.
  \item Read all unread memory addresses.
  \item Activate all remaining compute units.
  \item Read any remaining words from the input tape.
  \item Write blanks to the end of the output tape.
\end{itemize}
Most of the details here are tedious and unsurprising:
our filters are modified to propagate only the first $n/2$ inputs tagged with $1$;
all subsequent $1$'s are filtered out.
From here, each burn unit just
(1) writes an all zeros word,
(2) reads its own write,
(3) activates two subsequent burn units,
(4) reads an input word,
and (5) writes an all zeros word to the output tape.
We emphasize that the cost here is at most a constant factor blow-up, and
resources can be burned by units that operate in
parallel.

There is one detail that must be explored:
our burn strategy does not explain how to read memory addresses that were written (and not read) by the PSAM program itself.
To burn such values, the machine must be able to \emph{reach} these values.
So far, this is nontrivial.
Indeed, an arbitrary PSAM program could create an `orphan' memory element, for
which there is no sequence of memory addresses the machine can follow to find
the orphan.
In such cases, it is not clear how to read this element, and hence it is not clear how to complete the routing of the memory permutation network.
For this reason, we consider PSAM programs that clean up after themselves:
\begin{definition}[Clean]\label{defn:clean}
  A PSAM program $\mathcal{P}$ is \textbf{clean} if after executing on any input $\vb{x}$,
  $\mathcal{P}$ has not written to any address that has not also been read.
  I.e., $\mathcal{P}$ is $\textbf{clean}$ if each of $\mathcal{P}$'s $\var{write}$ instructions is eventually followed by a $\var{read}$ to the same address.
\end{definition}
Our considered PSAM program simulates PRAM by implementing two binary trees;
to clean up after itself, the program simply traverses those two trees without writing anything back.

When we consider clean PSAM programs, our burn strategy ensures that all partition networks are fully programmed, and hence we achieve a legal cyclic circuit.

\subsection{Uniformity}

Recall that our simulation of cyclic circuits by PRAM only works for circuits
that are polylog uniform.
The fact that our circuit constructions (see
\Cref{sec:circuit-constructions}) are polylog uniformity is trivial from
our explicit descriptions of the circuit's components.
Namely, in our algorithms we can clearly compute tight bounds on the number of gates in a particular subcircuit just by adding a multiplying a polylog number of times.
This makes it easy, based on an integer $i$, to ``search'' the circuit description for the $i$-th gate.

\section{PRAM from PSAM}\label{sec:psam-pram}

In this section, we simulate PRAM with a specific PSAM program.
The goal of this section is to establish the following:
\begin{lemma}[PRAM from PSAM]\label{thm:pram-psam}
  Let $\mathcal{P}$ be a PRAM program (\Cref{sec:word-pram}).
  There exists a PSAM program $\mathcal{P}'$ such that for all $\vb{x}$, $\mathcal{P}(\vb{x}) = \mathcal{P}'(\vb{x})$.
  Moreover, suppose that on a particular length-$n$ input $\vb{x}$,
  $\mathcal{P}(\vb{x})$ halts within $W$ work and $T$ time.
  Then $\mathcal{P}'(\vb{x})$ halts
  within $W \cdot O(\log n)$ work and
  $T \cdot O(\log n)$ time.
\end{lemma}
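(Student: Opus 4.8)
The plan is to exhibit an explicit PSAM program $\mathcal{P}'$ that simulates one PRAM step at a time, using the recursive tree-traversal procedure sketched in \Cref{sec:overview-unit}. The central data structures are the \emph{memory tree} (a log-depth binary tree whose leaf on path $i$ stores the PRAM value at address $i$) and the \emph{processor tree} (a log-depth binary tree whose leaves hold active PRAM processor states, each leaf placed on the path corresponding to the address that processor next wishes to access). Both are represented via the inductive $\var{Tree}$ type from \cref{defn:tree} and stored in PSAM shared memory, so that a $\keyword{match}$ on a node costs one $\var{read}$ and rebuilding a node costs a constant number of $\var{write}$s. I would first verify the invariant that each PRAM machine step corresponds to one invocation of a single recursive procedure $\var{step}$ that consumes these two trees and produces two fresh trees, then argue that iterating $\var{step}$ until the processor tree is $\var{Empty}$ faithfully reproduces the PRAM execution, including its $\star$-based write-conflict resolution.

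The key steps, in order, are: (1) define $\var{step}$ to simultaneously and recursively descend both trees, forking a PSAM process (via $\keyword{PAR}/\var{fork}$) whenever the processor node branches, and broadcasting the current memory leaf value $x_i$ to both branches so each processor receives the value at its target address; (2) at each processor leaf holding state $\rho$, apply the single PRAM instruction determined by $\rho$ and $x_i$, producing a writeback value $x_i'$ (encoded as a sparse one-leaf memory tree rooted on path $i$), zero/one/two successor processor states (encoded as a processor tree rooted at the addresses they next access), and correctly incremented program counters; (3) as the recursion unwinds, define the $\uplus$ merge on memory trees (combining colliding leaves with $\star$) and on processor trees (which provably never collide), so the fresh memory tree folds the writes back into untouched memory and the fresh processor tree collects all successors; (4) establish that this is a \emph{clean} PSAM program in the sense of \cref{defn:clean}, since every node read during traversal is rebuilt (or explicitly traversed without writeback during cleanup), which is needed to invoke \Cref{thm:psam-ckt}; (5) do the complexity accounting, showing each $\var{step}$ runs in $O(\log n)$ PSAM time and that the total work over one PRAM step is $O(\log n)$ times the number of active processors, giving the stated $W \cdot O(\log n)$ work and $T \cdot O(\log n)$ time bounds.

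The main obstacle I expect is the correctness and efficiency of the merge operation $\uplus$ as the recursion unwinds, and in particular ensuring it respects the \emph{single-access} discipline of PSAM while still running work-efficiently. Two subtleties must be handled carefully. First, the ordering of $\star$: because $\star$ is associative but not necessarily commutative, when merging colliding memory leaves I must combine written values in the PRAM-specified priority order (by processor age, then parental age, as in \cref{sec:word-pram}), so the merge cannot be an arbitrary symmetric combine — it must carry and respect the age metadata. Second, efficiency of $\uplus$ demands that $t_0 \uplus t_1$ be computed in time proportional to the sparsity of the smaller tree (or to the number of shared paths), not to the full memory size; otherwise a single step would cost $\Omega(n)$ rather than $O(\log n \cdot (\#\text{active processors}))$. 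I would argue this by a structural induction on $\uplus$: where one tree is $\var{Empty}$ the merge is $O(1)$, and where both are $\var{Branch}$ nodes the cost recurses only into shared subtrees, so the total cost telescopes to the number of leaves actually touched by processors this step.

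The remaining steps are comparatively routine. Verifying that each PRAM instruction type ($\var{read}$, $\var{write}$, local-state update, $\keyword{fork}$, $\keyword{die}$, input/output) is correctly realized at a processor leaf reduces to a case analysis mirroring the PRAM grammar, and verifying the lockstep/priority semantics of input and output tapes reuses the same age-based ordering already needed for $\star$. The PSAM-side $\keyword{PAR}$ discipline — that a child must return before its value is used — is respected automatically because we delegate the shallower subtree (fewer instructions) to the child, exactly as in the $\var{sum}$ example, so no return pointer is dereferenced prematurely. Finally, composing the per-step invariant over all $T$ PRAM steps yields $\mathcal{P}'(\vb{x}) = \mathcal{P}(\vb{x})$ with the claimed $O(\log n)$ blow-up in both work and time, completing the proof of \Cref{thm:pram-psam}.
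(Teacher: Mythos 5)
Your proposal follows essentially the same route as the paper's \Cref{sec:psam-pram}: the same memory-tree/processor-tree invariant (the paper's \Cref{inv:main}, with $\var{proc-id}$s appended to addresses so processor leaves never collide), the same recursive $\var{step}$ procedure with $\var{split}$, $\var{encode-path}$, and the $\uplus$ merge (including delegating the shallower subtree to the forked child and pipelining $\uplus$ via constant-time root pointers), the same cleanliness argument, and the same $p \cdot O(\log n)$ work accounting via the observation that merge cost telescopes with tree overlap. Your explicit flagging of the non-commutative $\star$ ordering against the age-based priority rule is a detail the paper leaves largely implicit in the argument order of $m_{\var{shallow}} \uplus m_{\var{deep}}$, but it does not change the structure of the proof.
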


The proof of \Cref{thm:pram-psam} is by construction of an appropriate PSAM
program, which is described in the remainder of this section.
By combining \Cref{thm:pram-psam} with \Cref{thm:psam-ckt}, we obtain \Cref{thm:main}: an efficient simulation of PRAM by cyclic circuits.

\Cref{sec:psam} showed that PSAM programs can be expressed as
simple recursive procedures over tree structures (\Cref{defn:tree}).
This section presents a number of such procedures that, when composed together, achieve our PRAM simulation.
Recall from \Cref{sec:overview-unit} that our high-level approach is to implement a simple PSAM program that in parallel traverses (1) a memory tree and (2) a processor tree.
This section formalizes that procedure and accounts for its complexity.

\subsection{Tree Operations}

We begin with helper PSAM procedures that manipulate trees.
We give a procedure used to traverse trees, a procedure used to merge trees, and a procedure used to construct fresh trees.

\paragraph{Splitting trees.}
Our first helper PSAM procedure $\var{split}$ is useful for traversing our memory tree.
$\var{split}$ takes as input a pointer to the root of a tree $t$.
It then reads $t$ and splits the read tree into two trees:
\algorithmlineheight
\begin{empheq}[box=\fbox]{align*}
  \gtext{1}~~&\var{split}(t) \defn\\
  \gtext{2}~~&~~\keyword{match}~t~\keyword{with}\\
  \gtext{3}~~&~~~~\var{Empty} \mapsto (\var{Empty}, \var{Empty})\\
  \gtext{4}~~&~~~~\var{Leaf}(x) \mapsto (\var{Leaf}(x), \var{Leaf}(x))\\
\gtext{5}~~&~~~~\var{Branch}(\text{\textvisiblespace}, t^\ell, \text{\textvisiblespace}, t^r) \mapsto (t^\ell, t^r)
\end{empheq}
If $t$ stores an empty or singleton tree, then $\var{split}$
simply outputs two copies of that tree.
If $t$ instead stores an internal tree node, then $\var{split}$ outputs the two subtrees.

$\var{split}$ takes constant time/work on the PSAM, since it simply (1) reads $t$, (2) rearranges words of PSAM local memory, and (3) performs either zero writes (branch case) or two writes (empty and leaf case) to save the two resulting trees.

\paragraph{Merging trees.}
Our second helper procedure takes as input two pointers to trees $t_0$ and $t_1$, and it \emph{merges} those trees (recall discussion in \Cref{sec:overview-unit}).
Namely, the merged tree $t_0 \uplus t_1$ contains all leaves in both $t_0$ and $t_1$, where we merge overlapping leaves with a binary associative operator $\star$.
Recall, we use $\star$ to resolve write conflicts in our CRCW PRAM.
We give a simple example of our merge operation:
\begin{center}
\includegraphics{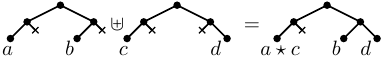}
\end{center}

Our specification of $\uplus$ assumes that all $\var{Leaf}$ nodes are on the same level of each argument tree.
The code is as follows:
\algorithmlineheight
\begin{empheq}[box=\fbox]{align*}
  \gtext{1}~~&t_0 \uplus t_1 \defn\\
  \gtext{2}~~&~~\keyword{match}~(t_0, t_1)~\keyword{with}\\
  \gtext{3}~~&~~~~(\var{Empty}, \text{\textvisiblespace}) \mapsto t_1\\
  \gtext{4}~~&~~~~(\text{\textvisiblespace}, \var{Empty}) \mapsto t_0\\
  \gtext{5}~~&~~~~(\var{Leaf}(x), \var{Leaf}(y)) \mapsto \var{Leaf}(x \star y)\\
  \gtext{6}~~&~~~~(\var{Branch}(d_0^\ell, t_0^\ell, d_0^r, t_0^r), \var{Branch}(d_1^\ell, t_1^\ell, d_1^r, t_1^r) \mapsto \\
  \gtext{7}~~&~~~~~~~~~~~d^\ell \defn \var{max}(d_0^\ell, d_1^\ell)\\
  \gtext{8}~~&~~~~~~~~~~~d^r \defn \var{max}(d_0^r, d_1^r)\\
  \gtext{9}~~&~~~~~~~~~~~t^\ell \defn \keyword{PAR}(t_0^\ell \uplus t_1^\ell)\\
 \gtext{10}~~&~~~~~~~~~~~t^r \defn \keyword{PAR}(t_0^r \uplus t_1^r)\\
 \gtext{11}~~&~~~~~~\keyword{in}~\var{Branch}(d^\ell, t^\ell, d^r, t^r)
\end{empheq}
In words, $\uplus$ proceeds by case analysis.
If either merged tree is empty, then the merge is trivial.
If both trees are singleton, then we combine the content of the leaves with $\star$.
In the general case, we break each tree into two trees, then we pairwise merge via recursion and glue the resulting trees together with a $\var{Branch}$ node.
These cases are exhaustive;
we assumed all leaves reside on the same fixed level, so we will never merge $\var{Branch}$ with $\var{Leaf}$.

Notice that $\uplus$ delegates \emph{each} recursive call to a child process.
This is safe because Line 13 of $\uplus$ simply writes pointers created by the recursive calls, and it does not \emph{dereference} the pointers $t^\ell, t^r$.
Due to the parallel calls, the root pointer of the merged tree is computed in \emph{constant} time on a PSAM, though it can take time linear in the depth of the tree for the PSAM to compute the full merged tree.
The fact that the root is available in constant time allows us to \emph{pipeline} calls to $\uplus$, which will be useful in reducing delay imposed by our PRAM simulation.

If $t_0$ has $n$ total nodes and $t_1$ has $m$ total nodes, then $\uplus$ takes PSAM work at most proportional to the \emph{minimum} of $n$ and $m$.
However, we emphasize that $\uplus$ can in certain cases terminate with \emph{much} less work than this.
For instance, suppose all leaves of $t_0$ are in the left branch of the root, and suppose all leaves of $t_1$ are in the right branch of the root.
In this case, $\uplus$ terminates within \emph{constant} work, because we reach
a trivial merge with $\var{Empty}$ after one recursive call.
Namely, the work required to merge two trees scales only with the amount those trees \emph{overlap} with one another.

\paragraph{Encoding strings as trees.}

Our final helper procedure is used to build new binary trees.
$\var{encode-path}$ takes as input two arguments:
(1) a length-$O(\log n)$ binary string $\vb{i}$ that represents an index and
(2) a data value $x$ to store.
$\var{encode-path}$ interprets $\vb{i}$ as the name of a path through a tree.
It then builds a tree with a single leaf that stores $x$ and that lies along path $\vb{i}$.
See the following example:
\begin{center}
\includegraphics{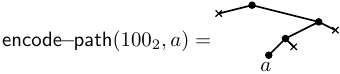}
\end{center}

The specification for $\var{encode-path}$ follows:
\algorithmlineheight
\begin{empheq}[box=\fbox]{align*}
  \gtext{1}~~&\var{encode-path}(\vb{i}, x) \defn\\
  \gtext{2}~~&~~\keyword{match}~\vb{i}~\keyword{with}\\
  \gtext{3}~~&~~~~[~] \mapsto \var{Leaf}(x)\\
  \gtext{4}~~&~~~~[0] \concat \vb{i}' \mapsto \var{Branch}(|\vb{i}'|, \keyword{PAR}(\var{encode-path}(\vb{i}', x)), 0, \var{Empty})\\
  \gtext{5}~~&~~~~[1] \concat \vb{i}' \mapsto \var{Branch}(0, \var{Empty}, |\vb{i}'|, \keyword{PAR}(\var{encode-path}(\vb{i}', x)))
\end{empheq}
Note that because $\vb{i}$ is a length-$O(\log n)$ binary string, we can
represent $\vb{i}$ as a single word of PSAM local memory.
$\var{encode-path}$ proceeds by case analysis on $\vb{i}$,
at each step branching either left or right.

As in our code for $\uplus$, $\var{encode-path}$ recurses in parallel.
This ensures that the pointer to the root of the encoding tree is available within constant time,
allowing pipelining of procedures.
It is easy to see that $\var{encode-path}$ fully computes the resulting tree in PSAM time/work proportional to the length of $\vb{i}$.

\subsection{Simulating PRAM}

\begin{figure}
{\small
\algorithmlineheight
\begin{empheq}[box=\fbox]{align*}
  \gtext{1}~~&\var{step}(\var{memory}, \var{processors}) \defn\\
  \gtext{2}~~&~~\keyword{match}~\var{processors}~\keyword{with}\\
  \gtext{3}~~&~~~~\var{Empty}\mapsto (\var{memory}, \var{Empty})\\
  \gtext{4}~~&~~~~\var{Branch}(d^\ell, p^\ell, d^r, p^r) \mapsto\\
  \gtext{5}~~&~~~~~~\keyword{let}~(m^\ell, m^r) \defn \var{split}(\var{memory})\\
  \gtext{6}~~&~~~~~~~~~~~(m_{\var{shallow}}, p_{\var{shallow}}) \defn \keyword{if}~d^\ell \leq d^r\\
  \gtext{7}~~&~~~~~~~~~~~~~~~~~~~~~~~~~~~~~~~~~~~~~\keyword{then}~\keyword{PAR}(\var{step}(m^\ell, p^\ell))\\
  \gtext{8}~~&~~~~~~~~~~~~~~~~~~~~~~~~~~~~~~~~~~~~~\keyword{else}~\keyword{PAR}(\var{step}(m^r, p^r))\\
  \gtext{9}~~&~~~~~~~~~~~(m_{\var{deep}}, p_{\var{deep}}) \defn \keyword{if}~d^\ell \leq d^r\\
 \gtext{10}~~&~~~~~~~~~~~~~~~~~~~~~~~~~~~~~~~~\keyword{then}~\var{step}(m^r, p^r)\\
 \gtext{11}~~&~~~~~~~~~~~~~~~~~~~~~~~~~~~~~~~~\keyword{else}~\var{step}(m^\ell, p^\ell)\\
 \gtext{12}~~&~~~~~~\keyword{in}~(m_{\var{shallow}} \uplus m_{\var{deep}}, p_{\var{shallow}} \uplus p_{\var{deep}})\\
 \gtext{13}~~&~~~~\var{Leaf}(\rho) \mapsto\\
 \gtext{14}~~&~~~~~~\keyword{let}~\var{val}\defn \keyword{match}~\var{memory}~\keyword{with}\\
 \gtext{15}~~&~~~~~~~~~~~~~~~~~~~~~~~\var{Leaf(x)} \mapsto x\\
 \gtext{16}~~&~~~~~~~~~~~~~~~~~~~~~~~\var{Empty} \mapsto 0^w\\
 \gtext{17}~~&~~~~~~~~~~~(\var{val}', \rho_0^?, \rho_1^?) \defn \var{handle-pram-instruction}(\rho, \var{val})\\
 \gtext{18}~~&~~~~~~~~~~~p_0 \defn \var{encode-possible-processor-state}(\rho_0^?)\\
 \gtext{19}~~&~~~~~~~~~~~p_1 \defn \var{encode-possible-processor-state}(\rho_1^?)\\
 \gtext{20}~~&~~~~~~\keyword{in}~(\var{encode-path}(\var{addr}(\rho), \var{val}'), p_0 \uplus p_1)\\
 \gtext{21}~~&\\
 \gtext{22}~~&\var{encode-possible-processor-state}(\rho^?) \defn\\
 \gtext{23}~~&~~\keyword{match}~\var{\rho^?}~\keyword{with}\\
 \gtext{24}~~&~~~~\var{Empty} \mapsto \var{Empty}\\
 \gtext{25}~~&~~~~\var{Leaf}(\rho) \mapsto \var{encode-path}(\var{addr}(\rho) \concat \var{proc-id}(\rho), \rho)
\end{empheq}
}
\caption{
  Our PSAM program that simulates a single step of a PRAM program. 
  $\var{step}$ takes as input a binary tree encoding of the current PRAM
  memory/current PRAM processors (see \Cref{inv:main}), and it traverses these trees simultaneously,
  pairing each processor state with the appropriate memory element.
  This allows each PRAM processor to perform one instruction.
  As $\var{step}$'s recursion unwinds, it builds a fresh
  memory/processor tree, ready for the next PRAM step.
}\label{fig:step}
\end{figure}

Now that we have built our PSAM procedures for manipulating trees,
we are ready to simulate PRAM.
Our simulation maintains two binary trees:
(1) a \emph{memory} tree, which holds written memory values, and
(2) a \emph{processor} tree, which stores the state of each active processor.
Our simulation enforces the following invariant on the structure of these trees:
\begin{invariant}[PRAM-by-PSAM Invariant]\label{inv:main}
  Our PRAM simulation at each step maintains two binary trees $\var{memory}$ and $\var{processors}$ whose content is as follows:
  \begin{align*}
    \var{memory} &= \biguplus_{(i, \var{val}) \in \var{mem}} \var{encode-path}(\var{bin}(i), \var{val})\\
    \var{processors} &= \biguplus_{\rho \in \var{procs}} \var{encode-path}(
      \var{addr}(\rho)\concat\var{proc-id}(\rho), \rho)
  \end{align*}
  Here, $\var{mem}$ denotes a partial map from addresses $i$ to values $\var{val}$ written to PRAM memory so far,
$\var{procs}$ denotes the set of active PRAM processor states,
$\var{bin}(i)$ denotes a binary encoding of $i$,
$\var{addr}(\rho)$ denotes the binary memory address requested by processor $\rho$, and $\var{proc-id}(\rho)$ denotes a unique ID assigned to processor $\rho$.
\end{invariant}

The memory tree is the merge of all values written to memory so far, where each
memory element $\var{val}$ is placed at a leaf corresponding to its index $i$.
Importantly, our memory tree only stores addresses that have been written so far;
all unwritten addresses are implicitly assumed to hold the all zeros string, and we do not need to explicitly represent those values in the tree. 
This fact ensures that the tree size is at most linear in the work of the PRAM program.

The processor tree is the merge of trees encoding processor states;
each processor state $\rho$ is placed at a leaf corresponding to the memory address $\var{addr}(\rho)$ that the processor wishes to access.
We ensure each processor $\rho$ has a distinct processor ID $\var{proc-id}(\rho)$, and more precisely
each processor state $\rho$ is placed at leaf $\var{addr}(\rho) \concat \var{proc-id}(\rho)$.
IDs ensure that no two processor states reside on the same path.

\Cref{fig:step} lists our PRAM-by-PSAM simulation.
Our simulation assumes existence of a procedure $\var{handle-pram-instruction}$.
$\var{handle-pram-instruction}$ takes as input (1) a processor state and (2) the memory value that processor currently wishes to access.
It outputs
(1) a value\footnote{
  If the processor wishes to merely read, it can write back whatever it just read; $\star$ should be properly chosen to prefer writes over reads.
} the processor wishes to write back to memory and
(2) two optional processor states.
I.e., the process can (1) halt by returning two empty processor states,
(2) continue by returning one non-empty processor state, or
(3) fork by returning two non-empty processor states.
For simplicity, we assume that these optional processor states are encoded as either (1) an empty tree or (2) a singleton tree.
We assume the details of handling an actual PRAM instruction, e.g. adding/multiplying/comparing/branching, are handled by $\var{handle-pram-instruction}$.
We also assume that $\var{handle-pram-instruction}$ runs in some fixed number of PSAM instructions that is at most $O(\log n)$.
Parameterizing over $\var{handle-pram-instruction}$ provides flexibility in the capabilities of the PRAM. 
We emphasize that all the difficult parts of simulating PRAM, e.g. routing memory and appropriately parallelizing behavior, are handled by $\var{step}$.

$\var{step}$ takes as input two pointers to trees: $\var{memory}$ and $\var{processors}$.
These trees inductively satisfy \Cref{inv:main}.
$\var{step}$ returns as output a pointer to a fresh memory tree and a fresh processor tree which also satisfy \Cref{inv:main}.
In this manner, $\var{step}$ can be repeatedly applied.
To achieve full-fledged PRAM, we compute a fixed-point by applying $\var{step}$ until the processor tree is empty.

$\var{step}$ proceeds by reading $\var{processors}$ and then case analyzing the tree.
If the processor tree is empty, then no work needs to be done.
Other cases are more detailed.

\textbf{If the processor tree is an internal node},
then we recursively handle both subtrees.
More specifically, we first $\var{split}$ the memory tree,
then recurse on the left trees/right subtrees.
This handling ensures that we can indeed match each processor with its desired memory element, because \Cref{inv:main} ensures that each processor state is in a subtree that matches the subtree of the memory element it wishes to read.

As an important detail, our recursion proceeds in parallel, delegating the shallower processor tree to a child PSAM process.
This ensures that the child process terminates before the parent process completes its own recursive call.
This, in turn, ensures that the dereference of $m_{\var{shallow}}, p_{\var{shallow}}$
(line 12) does not illegally dereference the return pointer of an incomplete child
process (see \Cref{sec:psam}).

Once both recursive calls are complete, we have two memory trees and two processor trees, which we pairwise merge with calls to $\uplus$.

\textbf{If the processor tree is a leaf node}, then we are ready to dispatch a single PRAM instruction.
To do so, we first case analyze the $\var{memory}$ tree.
Note that it is impossible that the memory tree is a $\var{Branch}$ node,
because by \Cref{inv:main} the processor tree is always deeper than the memory
tree.
Thus, the memory tree can either hold the single RAM element required by the current processor $\rho$, or it can be empty.
The latter case corresponds to a case where the processor $\rho$ wishes to access a memory cell that has not yet been accessed by any processor.
In this latter case, the processor will read the all zeros word.

From here, we call $\var{handle-pram-instruction}$ with the processor state and the accessed data element.
The processor returns an element to be written back to memory and up to two subsequent processor states.
We encode these values as trees, then we return the fresh memory/processor trees.

\paragraph{Cleaning up.}
Once we compute the fixed point of $\var{step}$, the PRAM simulation is finished.
However, recall that to achieve a \emph{clean} PSAM procedure (\Cref{defn:clean}), our program must read all unread data.

Cleaning up is straightforward.
Our processor tree is, by the fact that we have finished the simulation,
trivially empty, and to clean the memory tree, we simply traverse it from root
to leaves in a recursive procedure, without writing back any data.
This cleanup imposes at most constant factor overhead, as the PRAM memory cannot have accessed more than $W(n)$ leaves.

\subsection{Step Complexity}

In the following, we assume that the processor tree has $p$ leaves; i.e., there are $p$ active processors.

\paragraph{Work.}
We claim that $\var{step}$ uses at most $p \cdot O(\log n)$ total PSAM work.
To see this, first note that both the memory tree and the processor tree have $O(\log n)$ depth; this is ensured by the fact that the number of processors and the highest memory address are each at most polynomial in $n$.
If we for now ignore calls to $\uplus$, we see that each time we call $\var{step}$ on an internal node,
we expend constant work (and time) before recursing down both subtrees.
Our recursion only follows paths through the tree to locations where processors reside,
so the total number of recursive calls is at most $p \cdot O(\log n)$.

The calls to $\uplus$ are more difficult to analyze, but they similarly can be seen to consume at most $p \cdot O(\log n)$ total work.
One way to see this is as follows:
For the processor tree, we merge together $O(p)$ trees resulting from calls to $\var{encode-path}$, where each merged tree is a ``path tree'' with one leaf and $O(\log n)$ internal nodes.
Our $\var{step}$ procedure merges these trees together in some order resulting from $\var{steps}$'s recursion tree.
However, instead suppose we were to merge these trees together one at a time, maintaining an accumulator tree and merging each path tree into this accumulator.
Each such merge would cost at most $O(\log n)$ work, simply due to the small size of the path tree.
Thus, this strategy uses at most $p \cdot O(\log n)$ work.

Now, $\var{step}$ does not merge path trees into an accumulator in this way; it merges trees at each recursive call site.
However, we claim that $\var{step}$ uses \emph{strictly less work} than the above accumulator approach.
Indeed, the merge of two trees is strictly smaller than those two trees alone, since the merged tree will join together some internal nodes.
Hence, merging the processor tree consumes at most $p \cdot O(\log n)$ work.
This same argument holds for the memory tree.

Hence, $\var{step}$ consumes at most $p \cdot O(\log n)$ work.

\paragraph{Time.}

We claim that $\var{step}$ computes root pointers of its output trees within $O(\log n)$ PSAM time.
This follows almost immediately from the fact that both the memory and processor tree have logarithmic depth.

There is one non-trivial component to this argument: we must argue that calls to $\uplus$ can be pipelined.
Indeed, recall that $\uplus$ takes $O(\log n)$ to completely compute its output, but it computes its root pointer  within \emph{constant} time.

Formally, we can establish an inductive hypothesis that $\var{step}$ outputs its root pointers within $O(\log n)$ time.
In the base case this holds, since $\var{handle-pram-instruction}$ is
assumed to take at most $O(\log n)$ time and since it takes $O(\log n)$ steps
to encode a tree path (indeed, the root of the encoded path is available within $O(1)$ time).
In the inductive case, we recursively call $\var{step}$ twice in parallel, taking $O(\log n)$ time by the inductive hypothesis.
We conclude with calls to $\uplus$, which return their root in $O(1)$ time.
Hence, the general case incurs only constant additive time on top of its recursive calls.
As the depth of the tree is at most logarithmic, the induction holds.
Hence, $\var{step}$ computes its output within $O(\log n)$ time.

\section{Concluding Remarks}

By combining our results from \Cref{sec:circuit-constructions,sec:psam-pram}, we attain
a simulation of CRCW PRAM, and our simulation is achieved using cyclic circuits, a simple and easily implemented model.
Surprisingly, the simulation incurs only polylogarithmic overhead in terms of both work and time.

This demonstrates feasibility of powerful parallel machines, at least in theory.
Of course, substantial effort is needed to determine if this theoretical feasibility can translate to practical outcomes.
At the least, we believe our result shows that cyclic circuits are far more interesting than previously thought.
While prior works investigated cyclic circuits, none showed a connection between this simple model and PRAM.

\paragraph{An open question.}
Can our simulation of PRAM by cyclic circuits be improved?
Our simulation achieves $O(\log^4 n)$ work overhead and $O(\log^3 n)$ runtime overhead, and it is not obvious how to do significantly better.
We incur $O(\log n)$ overhead from word size, $O(\log^2 n)$ overhead from the dynamic permutation network, and $O(\log n)$ overhead from the simulation of PRAM by PSAM.
Thus, improving our result in a modular way would require either (1) changing the circuit model, (2) improving dynamic permutation networks, or (3) improving the simulation of PRAM by PSAM;
none of these improvements are obvious.
Of course, it might also be possible to mix the concerns of the dynamic permutation with PRAM-by-PSAM simulation, or by applying some completely different approach; these ideas are also not clear.

In terms of negative results, our only current insight is that there is a trivial $\Omega(\log n)$ lower bound on work overhead.
This bound comes simply from ``unfairness'' of the comparison between the considered PRAM and cyclic circuits: circuit wires hold individual bits while PRAM manipulates $\Theta(\log n)$-bit words.
To show a $\Omega(\log n)$ lower bound, we can consider simulating a PRAM program that, e.g., forces the cyclic circuit to take as input $O(n)$ elements and write them out in some input-specified order.
To achieve generality, the circuit must fully read the $O(n\cdot \log n)$-bit input.

One might expect that this permutation problem would allow us to strengthen the lower bound to $\Omega(\log^2 n)$ work overhead, since it is well known that comparison-based sorting requires $\Theta(n \cdot \log n)$ comparisons. 
However, trying to apply this insight immediately runs into well-known open questions regarding the complexity of sorting, see e.g.~\cite{FarHajMohKasShi19,AshLinWeiShi22}.
Namely, it might be possible to sort with $o(n \cdot \log n)$ Boolean gates, and this prevents us from establishing this stronger lower bound.

\bibliographystyle{alpha}
\bibliography{bib}

\end{document}